\documentclass[twoside,11pt]{article}

\usepackage{algorithm}
\usepackage{algpseudocode}
\usepackage{amsmath}
\usepackage{amsthm}
\usepackage{amssymb}
\usepackage{blindtext}
\usepackage{booktabs}
\usepackage{caption}
\usepackage{float}
\usepackage{graphicx}
\usepackage{multirow}
\usepackage{siunitx}
\usepackage{subcaption}
\usepackage{adjustbox}
\usepackage[T1]{fontenc}

\usepackage{ulem} 

\usepackage{jmlr2e}
\usepackage{xcolor} 

\newcounter{xaviercounter}

\newcounter{romaincounter}

\newcounter{yanniscounter}

\newcounter{anthocounter}

 \newcounter{thierrycounter}

\newcommand{\RE}{\mathrm{Re}}
\newcommand{\IM}{\mathrm{Im}}

\DeclareMathOperator*{\argmin}{argmin} 

\usepackage{lastpage}
\jmlrheading{Preprint}{2026}{1-\pageref{LastPage}}{02/26}{02/26}{21-0000}{Romain de Coudenhove}

\ShortHeadings{Fast Linear Reservoirs via Diagonalization}{de Coudenhove, Bendi-Ouis, Strock, Hinaut}
\firstpageno{1}

\begin{document}

\title{Fast Linear Reservoirs via Diagonalization}

\author{\name Romain de Coudenhove \email romain.de.coudenhove@ens.psl.eu \\
       \addr ENS PSL, Inria Center of Bordeaux University, LaBRI, IMN
       \AND
       \name Yannis Bendi-Ouis \email yannis.bendi-ouis@inria.fr \\
       \addr Inria Center of Bordeaux University, LaBRI, IMN
       \AND
       \name Anthony Strock \email astrock@stanford.edu \\
       \addr Department of Psychiatry \& Behavioral Sciences, Stanford University School of Medicine Stanford
       \AND
       \name Xavier Hinaut \email xavier.hinaut@inria.fr \\
       \addr Inria Center of Bordeaux University, LaBRI, IMN
       }
       
\editor{Preprint JMLR}

\maketitle

\begin{abstract}
We introduce a diagonalization-based optimization for Linear Echo State Networks (ESNs) that reduces the per-step computational complexity of reservoir state updates from quadratic to linear. By reformulating reservoir dynamics in the eigenbasis of the recurrent matrix, the recurrent update becomes a set of independent element-wise operations, eliminating the matrix multiplication. We further propose three methods to use our optimization depending on the situation: (i) Eigenbasis Weight Transformation (EWT), which preserves the dynamics of standard and trained Linear ESNs, (ii) End-to-End Eigenbasis Training (EET), which directly optimizes readout weights in the transformed space and (iii) Direct Parameter Generation (DPG), that bypasses matrix diagonalization by directly sampling eigenvalues and eigenvectors, achieving comparable performance to standard Linear ESNs. Across all experiments, both our methods preserve predictive accuracy while offering significant computational speedups, making them a replacement for standard Linear ESNs computations and training, and suggesting a shift of paradigm in linear ESN towards the direct selection of eigenvalues.
\end{abstract}


\begin{keywords}
  Reservoir Computing, Linear Echo State Networks, Diagonalization, Recurrent Neural Networks, Eigenvalue Decomposition
  
\end{keywords}



\section{Introduction}


Reservoir Computing (RC), most notably implemented as Echo State Networks (ESN) \citep{jaeger2004harnessing}, offers a compelling alternative to traditional Recurrent Neural Networks (RNNs) by keeping internal reservoir weights fixed and training only the output layer. While this approach simplifies the training procedure to a linear regression problem, the reservoir state update remains computationally expensive, scaling with quadratic complexity ($\mathcal{O}(N^2)$ where $N$ is the number of neurons). Consequently, there is a strong motivation to reduce this complexity without sacrificing performance.
While early RC research emphasized the necessity of non-linear reservoir dynamics, recent studies have demonstrated the surprising efficacy of linear reservoirs. Theoretically, it has been shown that a linear reservoir coupled with a non-linear readout (such as a polynomial function or neural network) is a universal approximator \citep{gonon2019reservoir}. Furthermore, linear dynamics are particularly well-suited for maximizing memory capacity, as analyzed in \citet{dambre2012information}. 
Several architectures have leveraged these properties. For instance, Legendre Memory Units (LMUs) utilize linear dynamics derived from Legendre polynomials to project input history onto an orthogonal basis, thereby extending memory capacity \citep{voelker2019legendre, chilkuri2021parallelizing}. 
Similarly, recent works by Peter Tino \citep{simon2025linear, verzelli2021input, tino2020dynamical} have explored the dynamical properties of linear models, clarifying the precise conditions under which linear ESNs can match or outperform their non-linear counterparts. 
Despite these advantages, the computational cost of matrix multiplication remains a bottleneck for large reservoirs. To address this, approaches such as Next Generation Reservoir Computing \citep{gauthier2021nextgenRC} have proposed replacing the explicit simulation of reservoir dynamics with static non-linear feature expansions, offering significant speedups. However, abandoning the recurrent dynamics entirely may not always be desirable as it limits the model to a fixed size window.

In this work, we present a method to reduce computational complexity while maintaining the explicit recurrence of the reservoir. We focus on optimizing Linear ESNs through matrix diagonalization, a transformation that converts costly matrix multiplications into efficient element-wise operations, reducing the complexity from $\mathcal{O}(N^2)$ to $\mathcal{O}(N)$. 
Beyond immediate computational efficiency, our investigation on linear reservoir is motivated by a fundamental advantage of linear systems: the ability to parallelize computation across the entire input sequence. While this principle of parallelizable linear state propagation aligns with recent advances in modern state-space models such as Mamba \citep{gu2024mamba}, our approach demonstrates that the specific findings of \citet{gupta2022diagonal} and \citet{smith2208simplified} regarding the efficacy of diagonal state spaces naturally extend to the Reservoir Computing paradigm. 
First, Section 2 formalizes the Linear ESN framework and highlights the computational cost of standard dynamics. Section 3 details our diagonalization-based reformulation which reduces the update complexity to $\mathcal{O}(N)$. Section 4 presents three practical implementations of this optimization, including the direct generation of spectral parameters. Finally, Section 5 validates these methods on standard benchmarks, followed by a discussion on the importance of spectral properties in Section 6.


\section{Linear Echo State Networks: Definition}

\begin{figure}[h]
    \centering
    \includegraphics[width=0.8\textwidth]{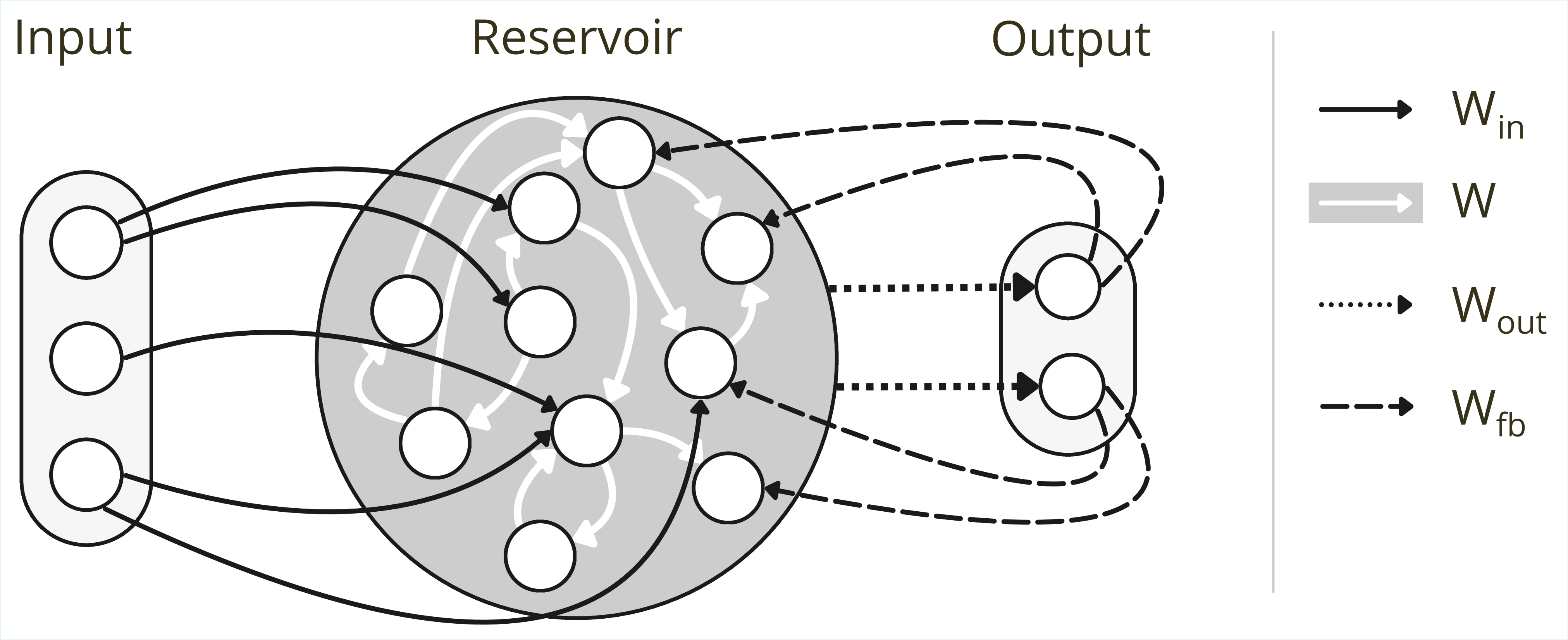}
    \caption{Standard architecture of an Echo State Network (ESN). In the Reservoir Computing paradigm, the input weights $W_{in}$, the internal recurrent weights $W$, and the optional feedback weights $W_{fb}$ are randomly initialized and kept fixed. The input is projected into a high-dimensional space within the reservoir, which utilizes its recurrent connections to maintain a deterministic state representation over time. Only the readout weights $W_{out}$ are trained to decode this internal state and produce the final output.}
    \label{fig:diag_cond}
\end{figure}

An Echo State Network (ESN) is a type of recurrent neural network particularly effective in learning complex dynamical systems. In its linear form, the network is fully characterized by a set of matrices:
\begin{itemize}
    \item $W \in \mathbb{R}^{N \times N}$, the reservoir transition matrix
    \item $W_{\text{in}} \in \mathbb{R}^{D_\text{in} \times N}$, the input matrix
    \item $W_{\text{fb}} \in \mathbb{R}^{D_\text{out} \times N}$, the feedback matrix (optional)
    \item $W_{\text{out}} \in \mathbb{R}^{N' \times D_\text{out}}$, the readout matrix 
\end{itemize}

\noindent Here, $N$ denotes the number of neurons in the reservoir, while $D_{\text{in}}$ and $D_{\text{out}}$ represent the dimensionality of the input and output signals, respectively. The dimension $N'$ corresponds to the size of the extended state vector used by the readout. Depending on the chosen configuration, this vector is formed by concatenating the reservoir state $r(t)$ with optional components: a constant bias, or the previous output $y(t-1)$. Consequently, $N'$ varies based on the inclusion of these additional terms alongside the $N$ reservoir units. \\
\noindent This structure leverages a fixed, randomly initialized reservoir $W$, where only the output weights $W_{\text{out}}$ are optimized. 
Let the time series input be $(u(t))_{t=1}^T \in \mathbb{R}^{T \times D_\text{in}}$, where $u(t) \in \mathbb{R}^{D_\text{in}}$ represents the input vector at time step $t$.
For every time step $t$, the reservoir produces an internal state $r(t) \in \mathbb{R}^N$, and the readout produces an output $y(t) \in \mathbb{R}^{D_{\text{out}}}$.

\subsection{Reservoir step}

The internal state $r(t)$ evolves according to the following dynamics. This recursive update propagates information through time via a weighted combination of past states, current inputs, and past outputs.

\begin{equation}
    r(t) = r(t-1) W + u(t) W_{\text{in}} + y(t-1) W_{\text{fb}}
    \label{eq:reservoir_step}
\end{equation}

\noindent \textbf{Note on optional input bias:} An input bias term can be added to the reservoir. However, since its effect is mathematically equivalent to adding a constant dimension of 1 to the input signal $u(t)$, we omit the input bias for the remainder of this work.

\subsection{Readout step}

The output $y(t)$ at time $t$ is a linear combination of an extended state vector (which includes a bias, the previous output, and the current reservoir state) and the readout weight matrix $W_{\text{out}}$. 

\begin{equation}
    y(t) =
    \begin{bmatrix} 1 & y(t-1) & r(t)\end{bmatrix}
    \, W_{\text{out}}, \quad \text{for } t = 1,\ldots,T
    \label{eq:readout_step}
\end{equation}

\begin{equation}
    W_\text{out} =
    \begin{bmatrix}
        W_\text{out,bias} \\
        W_\text{out,out} \\
        W_\text{out,res}
    \end{bmatrix}
\end{equation}

\noindent \textbf{Note on optional features:} Depending on the implementation, both bias and previous output are optional, only the current reservoir state is mandatory. Consequently $W_\text{out,bias}$ and $W_\text{out,out}$ are optional too. \\

\noindent \textbf{Note on initial states:} According to these equations, the model requires an initial state for the reservoir and for the output if there is a feedback. By specifying these initial conditions, the entire reservoir and output trajectories are uniquely determined for a given input sequence. Numerous methods exist to initialize those vectors, but in most cases they are initialized either by setting them to the zero vector, or by using a preliminary « warmup » phase where the network runs for a short period to stabilize the state dynamics before training starts. In this context, we set the initial conditions to zero.

\subsection{Leaking Rate}

The leaking rate $lr \in ]0,1]$ is a hyperparameter in reservoir computing that controls the temporal decay of internal states. It introduces a form of memory retention by blending the previous state and the potential new state at each time step. A smaller value of $lr$ results in a slower update and consequently a larger memory capacity, while $lr=1$ yields the fastest update and consequently the smallest memory capacity.
In practice in Linear ESNs, the leaking rate modifies the transition matrix $W$ by mixing it with the identity:  
\begin{equation}
    W^{(lr)} = lr \cdot W + (1 - lr) \cdot I
\end{equation}

\noindent The input and feedback weights are similarly rescaled:
\begin{align*}
    W_{\text{in}}^{(lr)} &= lr \cdot W_{\text{in}}\\
    W_{\text{fb}}^{(lr)} &= lr \cdot W_{\text{fb}}
\end{align*}

\noindent As a result, the full reservoir dynamics under leaking rate become:
\begin{align}
    r(0) &= 0, \\
    r(t) &= r(t-1)\, W^{(lr)} + u(t)\, W_{\text{in}}^{(lr)} + y(t)\, W_{\text{fb}}^{(lr)}
\end{align}

\noindent Leaking rate only reparametrizes the original matrices $W$, $W_{\text{in}}$ and $W_{\text{fb}}$, which does not alter the structure of the network. Thus, the exact same optimization method (section \ref{sec-diag-based-optim}) can be used with $W^{(lr)}$, $W_{\text{in}}^{(lr)}$ and $W_{\text{fb}}^{(lr)}$.

\subsection{Learning $W_\text{out}$}

Given input-output pairs $\{(u(t), y(t))\}_{t=1,\ldots,T}$, where $T$ is the total number of timesteps used for training, we construct:
\begin{align}
    X(t) &= \begin{bmatrix}
        1 & y(t-1) & r(t)
    \end{bmatrix} \in \mathbb{R}^{N'} \\
    Y(t) &= [y(t)] \in \mathbb{R}^{D_{\text{out}}}
\end{align}

\noindent $W_\text{out}$ is computed using standard ridge regression with $\alpha$ the regularization parameter:
\begin{equation}
    W_{out} = (X^T X + \alpha I)^{-1} X^T Y
\end{equation}

\subsection{Computational Complexity}

Here, we analyze each major component of the ESN architecture in terms of time complexity, explicitly accounting for sparsity ($1-\text{connectivity}$) where relevant.

\begin{itemize}
    \item \textbf{$W_{\text{in}}$ and $W_{\text{fb}}$ generation:} One of the input weight matrix $W_{\text{in}} \in \mathbb{R}^{D_{\text{in}} \times N}$ or the feedback weight matrix is optional. These matrices are typically generated with a connectivity $c_{\text{in}}$ or $c_{\text{fb}}$ (where each weight represents a non-zero connection with probability $c_{\text{in}}$ or $c_{\text{fb}}$). The non-zero weights are sampled from a specific distribution (e.g., normal or uniform). The complexity of generating these matrices is $\mathcal{O}(D_{\text{in}} N)$ and $\mathcal{O}(D_{\text{out}} N)$.

    \item \textbf{$W$ generation and Spectral Radius scaling:} The reservoir weight matrix $W \in \mathbb{R}^{N \times N}$ is similarly generated with a connectivity $c_{r}$. A critical step is scaling $W$ to achieve the desired spectral radius $\rho_0$. The computational cost depends heavily on the algorithm used to compute the leading eigenvalues: For dense matrices, standard eigenvalue decomposition algorithms are $\mathcal{O}(N^3)$. For sparse matrices, iterative methods such as the Implicitly Restarted Arnoldi Method (IRAM) are commonly use \citep{lehoucq1998arpack}. The complexity depends on the sparsity and convergence rate, scaling between $\mathcal{O}(N^2)$ and $\mathcal{O}(N^3)$.

    \item \textbf{Reservoir step:} This recursive update involves matrix-vector multiplications. By exploiting sparse matrix operations, the complexity per time step is reduced to the number of non-zero elements: $\mathcal{O}(c_r N^2 + c_{\text{in}} D_{\text{in}} N + c_{\text{fb}} D_{\text{out}}N)$ per time step.
    Therefore, the total cost over $T$ time steps is $\mathcal{O}(T (c_r N^2 + c_{\text{in}} D_{\text{in}} N + c_{\text{fb}} D_{\text{out}}N))$.

    \item \textbf{Readout Layer:} The output computation $y(t) = X(t) W_{\text{out}}$ involves a dense matrix multiplication. With $W_{\text{out}} \in \mathbb{R}^{N' \times D_{\text{out}}}$  , the complexity is $\mathcal{O}(N' D_{{\text{out}}})$. This cost is negligible compared to reservoir dynamics in long sequences, since $D_{\text{out}} \ll N$.

\end{itemize}

\noindent The computational complexity of ESN primarily depends on the reservoir step. This cost dominates due to the recurrent nature of the update, which must be computed sequentially $T$ times. However, it is important to note that the use of sparse matrices can dramatically reduce the effective time complexity compared to dense ones. 


\section{Diagonalization-Based Optimization}
\label{sec-diag-based-optim}

In this section, we introduce an optimization leveraging the diagonalization of the reservoir matrix $W$. By decomposing $ W = PDP^{-1} $, we reformulate the reservoir step into a pointwise form, where each state component evolves independently according to its corresponding eigenvalue. Recall that the set of diagonalizable matrices forms a set of full Lebesgue measure in the space of all square matrices; therefore, a randomly chosen matrix is diagonalizable with probability one \citep{horn2012matrix}. This simplification drastically reduces the amount of computation needed to compute the activity and output of a linear ESN.

\subsection{Core Transformation}

\begin{theorem}{Change-of-basis of the ESN dynamics}
\label{theorem_1}

(i) Let $W$ be the reservoir matrix and $P \in \text{GL}_N(\mathbb{C})$ a basis. Let us denote as $[\cdot]_{_P}$ the transformation into the basis $P$, and express both weights and state of the ESN into this new basis. 

\begin{center}
\begin{tabular}{lll}
 \text{Transformed reservoir matrix} & $[W]_{_P} = P^{-1}\, W P$   \\
 Transformed reservoir state & $[r(t)]_{_P} = r(t) P$   \\
 Transformed input weights & $[W_{\text{in}}]_{_P} = W_{\text{in}}\, P$   \\
 Transformed feedback weights & $[W_{\text{fb}}]_{_P} = W_{\text{fb}}\, P$   \\\\
 Transformed output weights & \multicolumn{2}{l}{
    $[W_\text{out}]_{_P} = 
    \begin{bmatrix}
        W_\text{out,bias} \\
        W_\text{out,out} \\
        [W_\text{out,res}]_{_P} \stackrel{\text{def}}{=} P^{-1}\, W_\text{out,res}
    \end{bmatrix} = 
    \begin{bmatrix}
        I & 0 \\
        0 & P^{-1}
    \end{bmatrix} 
    W_\text{out}$
}
\end{tabular}
\end{center}

(ii) Then the reservoir step becomes:
\begin{align}
    [r(0)]_{_P} &= 0 \\
    [r(t)]_{_P} &= [r(t-1)]_{_P}\, [W]_{_P} + u(t)\, [W_\text{in}]_{_P} + y(t-1)\, [W_\text{fb}]_{_P}
\end{align}

(iii) And the readout step becomes:
\begin{align}
    y(t) &= W_\text{out,bias} + y(t-1)\, W_\text{out,out} + [r(t)]_{_P}\, [W_\text{out,res}]_{_P}
\end{align}

(iv) Using the notation of
\( X(t) =
\begin{bmatrix}
  1  & y(t-1) & r(t)
\end{bmatrix}\),
we define:

\begin{align}
    [X(t)]_{_P} = \begin{bmatrix}
      1 & y(t-1) & [r(t)]_{_P} 
    \end{bmatrix} = X \begin{bmatrix}
      I & 0 \\
      0 & P
    \end{bmatrix}
\end{align}

It is then possible to learn the readout weights directly in $P$: 

\begin{align}
    [W_\text{out}]_{_P} = \left( {[X]_{_P}}^T\, [X]_{_P} + \alpha \begin{bmatrix}
        I & 0 \\
        0 & P^T P
    \end{bmatrix} \right)^{-1} {[X]_{_P}}^T\, Y
\end{align}
\end{theorem}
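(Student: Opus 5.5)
The plan is a direct algebraic verification in the row-vector convention of the paper, carried out item by item. Throughout, the only facts used are $PP^{-1}=P^{-1}P=I$ and that block-diagonal matrices with invertible blocks are invertible. I will write $Q=\begin{bmatrix} I & 0\\ 0 & P\end{bmatrix}$, so that $Q^{-1}=\begin{bmatrix} I & 0\\ 0 & P^{-1}\end{bmatrix}$, $[W_\text{out}]_{_P}=Q^{-1}W_\text{out}$ and $[X(t)]_{_P}=X(t)\,Q$.

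For (ii), I right-multiply the reservoir step \eqref{eq:reservoir_step} by $P$ and insert $PP^{-1}$ between $r(t-1)$ and $W$:
\[
r(t)P = r(t-1)P\,(P^{-1}WP) + u(t)\,W_\text{in}P + y(t-1)\,W_\text{fb}P .
\]
This is exactly the stated recursion in the transformed quantities. The initial condition $[r(0)]_{_P}=0\cdot P=0$ is immediate.

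For (iii), I expand the readout \eqref{eq:readout_step} blockwise as $y(t)=W_\text{out,bias}+y(t-1)W_\text{out,out}+r(t)W_\text{out,res}$. In the last term I insert $PP^{-1}$, giving $r(t)P\,P^{-1}W_\text{out,res}=[r(t)]_{_P}[W_\text{out,res}]_{_P}$. Equivalently, $X(t)W_\text{out}=X(t)QQ^{-1}W_\text{out}=[X(t)]_{_P}[W_\text{out}]_{_P}$. The same identity holds for the stacked design matrix, $[X]_{_P}=XQ$.

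For (iv), the main point needing care is the ridge formula, since $P$ may be complex. Because of this I will not argue through minimization of a transformed loss, which would require conjugate transposes. Instead I transform the normal equations purely algebraically, which is valid over $\mathbb{C}$ with ordinary transposes. I start from the characterization of $W_\text{out}$ as the solution of
\[
(X^TX+\alpha I)\,W_\text{out}=X^TY .
\]
Next I substitute $W_\text{out}=Q[W_\text{out}]_{_P}$ and left-multiply by $Q^T$. Using $Q^TX^T=(XQ)^T=[X]_{_P}^T$ and $Q^TQ=\begin{bmatrix} I&0\\0&P^TP\end{bmatrix}$, this yields
\[
\left([X]_{_P}^T[X]_{_P}+\alpha\begin{bmatrix} I&0\\0&P^TP\end{bmatrix}\right)[W_\text{out}]_{_P}=[X]_{_P}^TY .
\]

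It then remains to justify the inverse in the stated formula. The bracketed matrix equals $Q^T(X^TX+\alpha I)Q$. For $\alpha>0$, $X^TX+\alpha I$ is invertible because $X$ is real, and $Q$ is invertible. A product of invertible matrices is invertible, so solving the displayed equation gives exactly the claimed expression for $[W_\text{out}]_{_P}$. I expect this last step to be the only non-routine one: one has to notice that $P^TP$ (not $P^*P$) appears, and to confirm that the formula is consistent because it comes from conjugating the normal equations rather than from a complex least-squares problem.
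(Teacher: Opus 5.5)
Your proposal is correct and follows the paper's proof. For (ii) and (iii), both insert $PP^{-1}$ in the same way. For (iv), the paper substitutes $X=[X]_{_P}Q^{-1}$ (with your $Q=\text{diag}(I,P)$) into the closed-form ridge solution and collapses $Q^{-1}(X^TX+\alpha I)^{-1}Q^{-T}$ into $\bigl(Q^T(X^TX+\alpha I)Q\bigr)^{-1}$; this is the same identity you obtain by conjugating the normal equations, and your version additionally makes explicit the invertibility that the paper's chain of inverse manipulations uses tacitly.
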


\begin{proof}
\begin{align*}
    [r(t)]_{_P}
    &= r(t)\, P \\
    &= r(t-1)\, W\, P + u(t)\, W_\text{in}\, P + y(t-1)\, W_\text{fb}\, P \\
    &= r(t-1)\, P\, [W]_{_P}\, P^{-1} P + u(t) [W_\text{in}]_{_P} + y(t-1)\, [W_\text{fb}]_{_P} \\
    &= [r(t-1)]_{_P}\, [W]_{_P} + u(t)\, [W_\text{in}]_{_P} + y(t-1)\, [W_\text{fb}]_{_P}
\end{align*}

\begin{align*}
    y(t)
    &= W_\text{out,bias} + y(t-1)\, W_\text{out,out} + r(t)\, W_\text{out,res} \\
    &= W_\text{out,bias} + y(t-1)\, W_\text{out,out} + r(t)\, P\, P^{-1}\, W_\text{out,res} \\
    &= W_\text{out,bias} + y(t-1)\, W_\text{out,out} + [r(t)]_{_P}\, [W_\text{out,res}]_{_P}
\end{align*}

\begin{align*}
    [W_\text{out}]_{_P} &=
    \begin{bmatrix} I & 0 \\ 0 & P^{-1} \end{bmatrix}\,
    W_\text{out} \\
    &= \begin{bmatrix} I & 0 \\ 0 & P^{-1} \end{bmatrix}\, (X^T X + \alpha I)^{-1}\, X^T\, Y \\
    &= \begin{bmatrix} I & 0 \\ 0 & P^{-1} \end{bmatrix}\, \left(\left([X]_{_P}\, \begin{bmatrix} I & 0 \\ 0 & P^{-1} \end{bmatrix}\right)^T\, \left([X]_{_P}\, \begin{bmatrix} I & 0 \\ 0 & P^{-1} \end{bmatrix}\right) + \alpha I\right)^{-1} \left([X]_{_P}\, \begin{bmatrix} I & 0 \\ 0 & P^{-1} \end{bmatrix}\right)^T Y \\
    &= \begin{bmatrix} I & 0 \\ 0 & P^{-1} \end{bmatrix}\, \left(\begin{bmatrix} I & 0 \\ 0 & (P^T)^{-1} \end{bmatrix} {[X]_{_P}}^T\, [X]_{_P} \begin{bmatrix} I & 0 \\ 0 & P^{-1} \end{bmatrix} + \alpha I\right)^{-1}\, \begin{bmatrix} I & 0 \\ 0 & (P^T)^{-1} \end{bmatrix} {[X]_{_P}}^T\, Y \\
    &= \begin{bmatrix} I & 0 \\ 0 & P \end{bmatrix}^{-1}\, \left(\begin{bmatrix} I & 0 \\ 0 & P^T \end{bmatrix}^{-1} {[X]_{_P}}^T\, [X]_{_P} \begin{bmatrix} I & 0 \\ 0 & P \end{bmatrix}^{-1} + \alpha I\right)^{-1}\, \begin{bmatrix} I & 0 \\ 0 & P^T \end{bmatrix}^{-1} {[X]_{_P}}^T\, Y \\
    &= \left(\begin{bmatrix} I & 0 \\ 0 & P^T \end{bmatrix}\, \left(\begin{bmatrix} I & 0 \\ 0 & P^T \end{bmatrix}^{-1} {[X]_{_P}}^T\, [X]_{_P} \begin{bmatrix} I & 0 \\ 0 & P \end{bmatrix}^{-1} + \alpha I  \right)\begin{bmatrix} I & 0 \\ 0 & P \end{bmatrix}\right)^{-1}\, {[X]_{_P}}^T\, Y \\
    &= \left(\begin{bmatrix} I & 0 \\ 0 & P^T \end{bmatrix}\, \begin{bmatrix} I & 0 \\ 0 & P^T \end{bmatrix}^{-1} {[X]_{_P}}^T\, [X]_{_P} \begin{bmatrix} I & 0 \\ 0 & P \end{bmatrix}^{-1}\begin{bmatrix} I & 0 \\ 0 & P \end{bmatrix} + \alpha \begin{bmatrix} I & 0 \\ 0 & P^T \end{bmatrix}\, \begin{bmatrix} I & 0 \\ 0 & P \end{bmatrix}\right)^{-1}\, {[X]_{_P}}^T\, Y \\
    &= \left({[X]_{_P}}^T\, [X]_{_P} + \alpha \begin{bmatrix} I & 0 \\ 0 & P^T\, P \end{bmatrix}\right)^{-1} {[X]_{_P}}^T\, Y
\end{align*}
\end{proof}

\subsection{A special case of a diagonalizable matrix}

\begin{corollary}{Diagonalization Optimization}
\label{theorem_2}

\noindent Assume the reservoir matrix $W$ is diagonalizable. Let $W = P\, [W]_{_P}\, P^{-1}$ where:
\begin{itemize}
    \item $P \in \text{GL}_N(\mathbb{C})$: Matrix of eigenvectors
    \item $[W]_{_P} = \text{diag}(\Lambda \stackrel{\text{def}}{=} (\lambda_1, \ldots, \lambda_N))$: Diagonal matrix of eigenvalues
\end{itemize}

\noindent Then the reservoir step become pointwise:
\begin{align}
    [r(t)]_{_P} &= [r(t-1)]_{_P} \odot \Lambda + u(t)\, [W_\text{in}]_{_P} + y(t-1)\, [W_\text{fb}]_{_P}
\end{align}
where $\odot$ denotes element-wise multiplication.
\end{corollary}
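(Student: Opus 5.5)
The plan is to obtain Corollary~\ref{theorem_2} as a direct specialization of Theorem~\ref{theorem_1}(ii), choosing for $P$ the eigenvector matrix of $W$. Since $W$ is assumed diagonalizable over $\mathbb{C}$, there is a $P \in \text{GL}_N(\mathbb{C})$ whose columns are eigenvectors of $W$, and then $P^{-1} W P = \text{diag}(\lambda_1,\ldots,\lambda_N)$. This is exactly the transformed reservoir matrix $[W]_{_P}$ of Theorem~\ref{theorem_1}(i). So the hypothesis $W = P\,[W]_{_P}\,P^{-1}$ with $[W]_{_P}$ diagonal is consistent with the definition used in the theorem, and the theorem applies verbatim with this $P$.

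First, I will invoke Theorem~\ref{theorem_1}(ii) to get
\[
[r(t)]_{_P} = [r(t-1)]_{_P}\,[W]_{_P} + u(t)\,[W_\text{in}]_{_P} + y(t-1)\,[W_\text{fb}]_{_P}, \qquad [r(0)]_{_P} = 0.
\]
Only the first term on the right-hand side needs rewriting; the input and feedback terms are carried over unchanged.

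Second, I will use the elementary identity that for a row vector $v \in \mathbb{C}^{1\times N}$ and $D = \text{diag}(\Lambda)$,
\[
(vD)_j = \sum_{k} v_k D_{kj} = v_j \lambda_j .
\]
This holds because $D_{kj} = 0$ for $k \neq j$, and it gives $vD = v \odot \Lambda$, with $\Lambda$ viewed as a row vector. Applying it with $v = [r(t-1)]_{_P}$ yields the stated pointwise recursion. I would also note that each coordinate of $[r(t)]_{_P}$ then evolves independently, driven by its own eigenvalue, which is the claimed decoupling.

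There is no real obstacle here; the argument is a one-line substitution. The only point needing care is bookkeeping: the paper uses the row-vector convention $r(t)W$, so I must check that the diagonal matrix multiplies on the right. That gives the component-wise product $v_j\lambda_j$ rather than something involving a transpose. I must also check that complex $P$ and $\Lambda$ cause no issue, since Theorem~\ref{theorem_1} was stated for $P \in \text{GL}_N(\mathbb{C})$.
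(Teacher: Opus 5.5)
Your proposal is correct and is the paper's proof: specialize Theorem~\ref{theorem_1}(ii) to the eigenvector basis, then note that right-multiplying the row vector $[r(t-1)]_{_P}$ by $\text{diag}(\Lambda)$ equals $[r(t-1)]_{_P}\odot\Lambda$. Your componentwise check $(vD)_j = v_j\lambda_j$ just writes out the step the paper states in words.
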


\begin{proof}
Applying the change of basis to the standard reservoir update equation yields $$[r(t)]_{_P} = [r(t-1)]_{_P} [W]_{_P} + u(t)\, [W_\text{in}]_{_P} + y(t-1)\, [W_\text{fb}]_{_P}$$ Since $[r(t-1)]_{_P}$ is a row vector and $[W]_{_P} = \text{diag}(\Lambda)$, their matrix product strictly simplifies to the element-wise multiplication $[r(t-1)]_{_P} \odot \Lambda$.
\end{proof}

\noindent \textbf{Note on parallelization: } This form enables parallelization since the element-wise operations are independent across neurons.

\subsection{Apply $W_{\text{in}}$ after Reservoir Steps}

This subsection demonstrates that with our method the input matrix $W_{\text{in}}$ can actually be applied after the temporal update. This proves that the network depends merely on $\Lambda$, further justifying our focus on optimizing the spectral properties. For what follows, we suppose a diagonal reservoir without feedback weights ($W_{\text{fb}} = 0$) and with an initial state $r(0) = 0$.

\begin{lemma}
    \begin{equation*}
        r(t) = \sum_{i=1}^t (u(i) W_\text{in}) \odot \Lambda^{t-i}
    \end{equation*}
    where the power of $\Lambda$ denotes element-wise exponentiation, not a matrix product.
\end{lemma}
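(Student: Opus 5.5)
The plan is a straightforward induction on $t$, using the pointwise recursion of Corollary~\ref{theorem_2} specialized to the present setting. Here the state is already expressed in the eigenbasis, so $r(t)$ in the lemma stands for $[r(t)]_{_P}$ and $W_\text{in}$ for $[W_\text{in}]_{_P}$. With $W_{\text{fb}}=0$ the recursion reads
\begin{equation*}
    r(t) = r(t-1)\odot\Lambda + u(t)\,W_\text{in}, \qquad r(0)=0 .
\end{equation*}
I will adopt the conventions that $\Lambda^{0}=(1,\ldots,1)$ (the neutral element for $\odot$) and that an empty sum is the zero vector. With these conventions the formula also holds trivially at $t=0$, which gives a clean base case.

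\textbf{Base case.} For $t=1$, the recursion gives $r(1)=0\odot\Lambda+u(1)W_\text{in}=(u(1)W_\text{in})\odot\Lambda^{0}$. This matches the single-term sum.

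\textbf{Inductive step.} Assume the formula holds at $t-1$, and substitute it into the recursion:
\begin{equation*}
    r(t) = \Big(\sum_{i=1}^{t-1}(u(i)W_\text{in})\odot\Lambda^{t-1-i}\Big)\odot\Lambda + (u(t)W_\text{in})\odot\Lambda^{0}.
\end{equation*}
The only algebraic facts needed are the following properties of the Hadamard product on $\mathbb{C}^N$: it is associative, commutative, and distributes over vector addition. Together these give $\big(a\odot\Lambda^{k}\big)\odot\Lambda=a\odot\Lambda^{k+1}$, since elementwise powers satisfy $\lambda_j^k\lambda_j=\lambda_j^{k+1}$. Distributing $\odot\Lambda$ over the sum turns each exponent $t-1-i$ into $t-i$. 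The extra term is exactly the $i=t$ summand, which closes the induction.

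There is no real obstacle. The only point needing care is bookkeeping: stating the $\Lambda^0$ convention explicitly, and being clear that the identity lives in the transformed basis. The latter is where the diagonal structure makes the recursion pointwise. An alternative non-inductive route would be to unroll $r(t)=\sum_i u(i)W_\text{in}\,\mathrm{diag}(\Lambda)^{t-i}$ and note that right-multiplying a row vector by a diagonal matrix is the same as taking the Hadamard product with its diagonal. I would mention this as a remark but use the induction as the proof.
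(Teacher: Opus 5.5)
Your proposal is correct and follows essentially the same route as the paper: induction on $t$ using the pointwise recursion $r(t)=r(t-1)\odot\Lambda+u(t)W_\text{in}$ with $r(0)=0$, distributing $\odot\Lambda$ over the sum to shift each exponent by one, and recognizing the new input term as the $i=t$ summand with $\Lambda^{0}$. The only differences are cosmetic: the paper takes the empty-sum case $t=0$ as its base case and steps from $t$ to $t+1$. Your explicit convention $\Lambda^{0}=(1,\ldots,1)$ is a small but welcome clarification, since it also covers the case of zero eigenvalues.
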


\begin{proof}
    By recurrence on $t$. 
    For $t=0$: $r(0) = 0$ by definition. The property holds.
    Assume the property is true for a given $t \geq 0$. For $t+1$:
    \begin{align*}
        r(t+1) &= r(t) \odot \Lambda + u(t+1)W_{\text{in}} \\
        &= \left( \sum_{i=1}^t (u(i) W_\text{in}) \odot \Lambda^{t-i} \right) \odot \Lambda + u(t+1)W_{\text{in}} \odot \Lambda^0 \\
        &= \sum_{i=1}^t (u(i) W_\text{in}) \odot \Lambda^{t+1-i} + u(t+1)W_{\text{in}} \odot \Lambda^{t+1-(t+1)} \\
        &= \sum_{i=1}^{t+1} (u(i) W_\text{in}) \odot \Lambda^{t+1-i}
    \end{align*}
    The property holds for $t+1$, concluding the proof.
\end{proof}

\noindent According to this formula, to compute the reservoir state, we first apply $W_\text{in}$ to the input signal and then apply $\Lambda$. Actually, this can be done in the reverse order. Let $R(t) \in \mathbb{C}^{D_\text{in} \times N}$ be a state matrix defined by:
\begin{align*}
    R(0) &= 0 \\
    R(t) &= R(t-1) \odot 
    \begin{bmatrix} 
        \Lambda \\ 
        \vdots \\ 
        \Lambda 
    \end{bmatrix} 
    + 
    \begin{bmatrix} 
        u(t)^T & \dots & u(t)^T
    \end{bmatrix}
\end{align*}
Here, the first term broadcasts the row vector $\Lambda$ across all $D_\text{in}$ rows, and the second term repeats the column vector $u(t)^T$ across all $N$ columns to form a $D_\text{in} \times N$ matrix.

\begin{lemma}
    Let $\text{row}_d(\cdot)$ denote the $d$-th row of a matrix.
   \begin{equation*}
        \text{row}_d(R(t)) = \sum_{i=1}^t u(i)_d \Lambda^{t-i}
    \end{equation*}
\end{lemma}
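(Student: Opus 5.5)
I will argue by induction on $t$, exactly as for the previous lemma, but working row by row. The first step is to read off the row-wise form of the recursion defining $R(t)$. The first term broadcasts the row vector $\Lambda$ to every row. The second term is the $D_\text{in}\times N$ matrix whose $d$-th row is the constant row vector $u(t)_d\,(1,\dots,1) = u(t)_d \Lambda^0$. Here I use the convention $\Lambda^0=(1,\dots,1)$ for element-wise powers, as in the previous lemma. Since $\odot$ acts entrywise, taking the $d$-th row commutes with the update, and for each fixed $d \in \{1,\dots,D_\text{in}\}$ I get
\begin{equation*}
\text{row}_d(R(t)) = \text{row}_d(R(t-1)) \odot \Lambda + u(t)_d\, \Lambda^0 .
\end{equation*}
This is the recursion of the previous lemma with $N$ independent scalar channels, with the vector input $u(i)W_\text{in}$ replaced by the scalar input $u(i)_d$ broadcast to all entries.

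\textbf{Base case.} For $t=0$, $R(0)=0$, so every row is zero, and this matches the empty sum.

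\textbf{Inductive step.} Assume the formula holds at $t$. Then
\begin{equation*}
\text{row}_d(R(t+1)) = \Big(\sum_{i=1}^t u(i)_d \Lambda^{t-i}\Big)\odot\Lambda + u(t+1)_d\Lambda^{0}.
\end{equation*}
The scalars $u(i)_d$ factor out of the Hadamard product. Using $\Lambda^{a}\odot\Lambda=\Lambda^{a+1}$ entrywise, the right-hand side becomes $\sum_{i=1}^{t+1} u(i)_d\Lambda^{t+1-i}$, which is the claim at $t+1$.

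There is no genuine obstacle here. The only point needing care is the interpretation of the broadcast notation: I must make explicit that the $d$-th row of the repeated matrix $[u(t)^T \cdots u(t)^T]$ is $u(t)_d$ times the all-ones row, that is, $u(t)_d\Lambda^0$. I must also note that scalar multiplication commutes with $\odot$. Once this is written down, the induction is a one-line computation mirroring the previous lemma.
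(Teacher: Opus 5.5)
Your proof is correct and follows essentially the same route as the paper: induction on $t$ for a fixed row $d$, using the row-wise recursion $\text{row}_d(R(t+1)) = \text{row}_d(R(t)) \odot \Lambda + u(t+1)_d \Lambda^0$ and absorbing the extra factor $\Lambda$ into each element-wise power. You also state explicitly that the $d$-th row of the broadcast input term is $u(t)_d$ times the all-ones row; the paper leaves this step implicit.
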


\begin{proof}
    By recurrence on $t$ for a fixed row $d$.
    For $t=0$: $\text{row}_d(R(0)) = 0$ by definition. The property holds.
    Assume true for $t \geq 0$. For $t+1$:
    \begin{align*}
        \text{row}_d(R(t+1)) &= \text{row}_d(R(t)) \odot \Lambda + u(t+1)_d \Lambda^0 \\
        &= \left( \sum_{i=1}^t u(i)_d \Lambda^{t-i} \right) \odot \Lambda + u(t+1)_d \Lambda^0 \\
        &= \sum_{i=1}^{t+1} u(i)_d \Lambda^{t+1-i}
    \end{align*}
\end{proof}

\begin{theorem}
    \label{th:win_after}
    The standard reservoir state $r(t)$ can be recovered by element-wise weighting and summation:
    \begin{equation*}
        r(t) = \sum_{d=1}^{D_\text{in}} \text{row}_d(W_\text{in}) \odot \text{row}_d(R(t))
    \end{equation*}
    Or more compactly, letting $\mathbf{1}$ be a column vector of ones of size $D_\text{in} \times 1$:
    \begin{equation*}
        r(t) = \mathbf{1}^T (W_\text{in} \odot R(t))
    \end{equation*}
\end{theorem}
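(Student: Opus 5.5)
The plan is to combine the two preceding lemmas and reorganize a finite double sum; no induction is needed beyond what those lemmas already supply. Throughout, $r(t)$ denotes the state in the eigenbasis: diagonal reservoir, $W_{\text{fb}}=0$, $r(0)=0$. Likewise $W_\text{in}$ means the transformed input weights, as in the setting of this subsection.

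First, expand the vector-matrix product $u(i)W_\text{in}$ row by row:
\begin{equation*}
u(i)\,W_\text{in} = \sum_{d=1}^{D_\text{in}} u(i)_d\,\text{row}_d(W_\text{in}).
\end{equation*}
Substituting this into the first lemma gives
\begin{equation*}
r(t) = \sum_{i=1}^t \sum_{d=1}^{D_\text{in}} u(i)_d\,\text{row}_d(W_\text{in}) \odot \Lambda^{t-i}.
\end{equation*}
This step uses that the Hadamard product is bilinear and that scalars commute with $\odot$.

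Second, exchange the two finite sums and factor $\text{row}_d(W_\text{in})$ out of the inner sum over $i$, again by distributivity of $\odot$ over addition:
\begin{equation*}
r(t) = \sum_{d=1}^{D_\text{in}} \text{row}_d(W_\text{in}) \odot \Big(\sum_{i=1}^t u(i)_d \Lambda^{t-i}\Big).
\end{equation*}
By the second lemma, the bracketed term is exactly $\text{row}_d(R(t))$, which yields the first identity. For $t=0$ both sides vanish.

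Finally, the compact form follows from two observations. The $d$-th row of $W_\text{in}\odot R(t)$ is $\text{row}_d(W_\text{in})\odot\text{row}_d(R(t))$, since both matrices are $D_\text{in}\times N$. Left multiplication by $\mathbf{1}^T$ sums the rows of a matrix.

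There is no real obstacle. The only point needing care is bookkeeping: the row-vector convention (so $u(i)W_\text{in}$ combines rows of $W_\text{in}$), and the distinction between element-wise powers of $\Lambda$ and matrix products.
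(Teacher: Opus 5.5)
Your proof is correct and follows the paper's argument step for step. You expand $u(i)W_\text{in}$ as a combination of the rows of $W_\text{in}$, substitute into the first lemma, swap the finite sums, factor out $\text{row}_d(W_\text{in})$, and identify $\text{row}_d(R(t))$ via the second lemma; your justification of the compact form ($\mathbf{1}^T$ summing the rows of $W_\text{in}\odot R(t)$) spells out a step the paper states without comment.
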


\begin{proof}
Using the linearity of the summation and detailing the intermediate associative steps:
\begin{align*}
    r(t) &= \sum_{i=1}^t (u(i)W_\text{in}) \odot \Lambda^{t-i} \\
    &= \sum_{i=1}^t \left( \sum_{d=1}^{D_\text{in}} u(i)_d \text{row}_d(W_\text{in}) \right) \odot \Lambda^{t-i} \\
    &= \sum_{i=1}^t \sum_{d=1}^{D_\text{in}} \left( u(i)_d \text{row}_d(W_\text{in}) \right) \odot \Lambda^{t-i} \\
    &= \sum_{d=1}^{D_\text{in}} \sum_{i=1}^t \text{row}_d(W_\text{in}) \odot \left( u(i)_d \Lambda^{t-i} \right) \\
    &= \sum_{d=1}^{D_\text{in}} \text{row}_d(W_\text{in}) \odot \left( \sum_{i=1}^t u(i)_d \Lambda^{t-i} \right) \\ 
    &= \sum_{d=1}^{D_\text{in}} \text{row}_d(W_\text{in}) \odot \text{row}_d(R(t)) \\
    &= \mathbf{1}^T (W_\text{in} \odot R(t))
\end{align*}
\end{proof}

\subsubsection{Application: Diagonal Linear ESN}
\label{sec:win_end}

Let $(\Lambda \in \mathbb{C}^N, W_\text{in} \in \mathbb{C}^{D_\text{in} \times N}, W_\text{out} \in \mathbb{C}^{N \times D_\text{out}})$ be a diagonal linear ESN, fed with an input signal $(u(t))_{1\leq t \leq T}$. According to the previous theorem:
\begin{equation*}
    y(t) = r(t) W_\text{out} = \left[ \mathbf{1}^T (W_\text{in} \odot R(t)) \right] W_\text{out}
\end{equation*}
While this formulation is mathematically exact for any dimensions, optimizing $W_\text{out}$ directly from $R(t)$ without fixing $W_\text{in}$ first remains complex when $D_\text{in} > 1$ or $D_\text{out} > 1$. However, this establishes a fundamental property: the general temporal dynamics are entirely captured by $R(t)$ and $\Lambda$, completely independently of the input weight $W_\text{in}$. This property becomes highly actionable when $D_\text{in} = 1$ and $D_\text{out} = 1$ (see Appendix \ref{app:special_case_win}).


\subsection{Computational Advantages}

This approach reduces computational costs after initial preprocessing \citep{golub2013matrix}:
\begin{itemize}
    \item \textbf{Eigendecomposition}: $\mathcal{O}(N^3)$ (one-time preprocessing cost), compared to the spectral radius computation, which can be between $\mathcal{O}(N^2)$ and $\mathcal{O}(N^3)$ as explained above.
    \item \textbf{$[W_\text{in}]_{_P}$ and $[W_\text{fb}]_{_P}$ computation}: $\mathcal{O}(N^2D_{\text{in}})$ and $\mathcal{O}(N^2D_{\text{fb}})$
    \item \textbf{Transformed reservoir step}: $\mathcal{O}(N(D_{\text{in}}+D_{\text{out}}))$ (reduced from $\mathcal{O}(N(c_\text{r}N + c_\text{in} D_{\text{in}} + c_\text{fb} D_{\text{out}})$) with $D_{\text{in}},D_{\text{out}} \ll N$ (reservoir looking to project the inputs into a high dimensional space).
    \item \textbf{Transformed readout step}: $\mathcal{O}(N' D_{\text{out}})$ (unchanged)
\end{itemize}

\noindent The key advantage is the reduction of the reservoir step complexity from $\mathcal{O}(N^2)$ to $\mathcal{O}(N(D_{\text{in}}+D_{\text{out}})$ per time step, making this approach highly efficient despite the initial $\mathcal{O}(N^3)$ preprocessing cost. Figure \ref{fig:step_duration} shows the comparison between standard linear reservoir computation and our optimisation in function of the number of neurons in the reservoir for each processing step. We can also note that even with the use of complex numbers (which should doubles the number of parameters with a real and an imaginary part), the training of the readout can be performed with real matrices (see Appendix \ref{app:implementation}) as in the standard computations, leading to no additional cost.

\begin{figure}[H]
    \centering
    \includegraphics[width=1\linewidth]{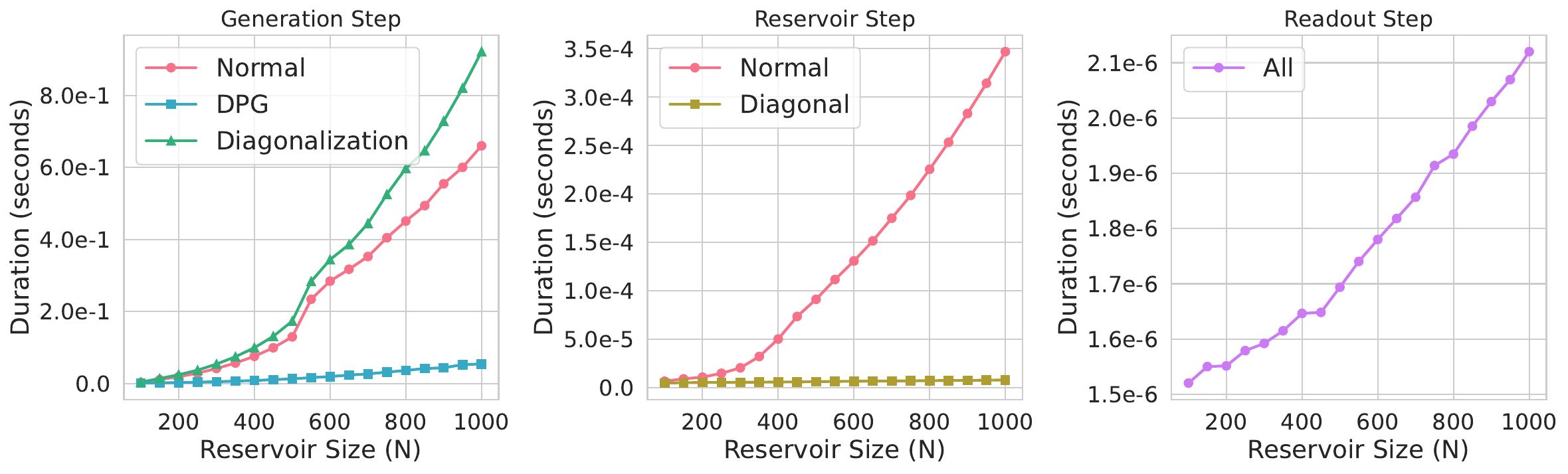}
    \caption{Comparison of standard computation and proposed optimizations for different reservoir sizes. 
    (i) Generation Step: Evaluates three initialization methods. \textit{Normal} generates a standard linear reservoir with a weight matrix $W$. \textit{Diagonalization} generates a standard $W$ and then diagonalizes it (applicable to EWT/EET in section \ref{sec:methods}). \textit{DPG} directly generates a diagonal matrix (see section \ref{sec:methods}).
    (ii) Reservoir Step: Compare the standard computation with the \textit{Diagonal} one. Because EWT, EET, and DPG share an identical diagonal structure after the generation step, they are represented together as \textit{Diagonal}.
    (iii) Readout Step: Displays only a single curve because the computational cost is identical across all methods. With the implementation proposed in Appendix \ref{app:implementation}, the training of the readout can be performed with real matrices, equating the readout cost of the standard method.
    (Note) For reservoir and readout steps, the duration displayed is for a single time step. To evaluate the total duration over the whole sequence, these values must be multiplied by the total number of time steps.}
    \label{fig:step_duration}
\end{figure}

\section{Methods}
\label{sec:methods}

This section presents three methods for efficient linear reservoir design and training. The first, Eigenbasis Weight Transformation (EWT), enables efficient inference by transforming pre-trained weights $W_{\text{out}}$ into the eigenbasis without retraining. The second, End-to-End Eigenbasis Training (EET), enables full end-to-end training by directly computing $[W_\text{out}]_{_P}$ in the transformed space, which drastically reduces computational cost during training. Finally, Direct Parameter Generation (DPG) eliminates the need for initial reservoir matrix $W$ by directly constructing its spectral components $\Lambda$ and eigenvector basis $P$ and then computing $[W_\text{out}]_{_P}$ in the transformed space.

\subsection{Training Data Preparation}

Given input-output pairs $\{(u(t), y(t))\}_{t=1,\ldots,T}$, we construct:
\begin{align}
    X(t) &= \begin{bmatrix}
        1 & y(t-1) & r(t)
    \end{bmatrix} \in \mathbb{R}^{N'} \quad \text{(extended state)} \\
    Y(t) &= [y(t)] \in \mathbb{R}^{D_{\text{out}}} \quad \text{(target output)}
\end{align}
where $T$ is the length of the sequence.

\subsection{Method 1: Eigenbasis Weight Transformation (EWT)}

If $W_{out}$ has already been computed using standard ridge regression:
\begin{equation}
    W_{out} = (X^T X + \alpha I)^{-1} X^T Y
\end{equation}

\noindent Then the transformed output weights are (see Theorem 1 in \ref{theorem_2}):
\begin{equation}
    [W_\text{out}]_{_P} = \begin{bmatrix}
        I & 0 \\
        0 & P^{-1}
    \end{bmatrix}\, W_\text{out}
\end{equation}

\subsection{Method 2: End-to-End Eigenbasis Training (EET)}

If $W_\text{out}$ has not already been computed, it is possible to train directly $[W_\text{out}]_{_P}$.

\vspace{0.5cm}
\noindent \textbf{Proposition: Direct Output Weight Computation} \\
\vspace{-0.4cm}

\noindent Let
\([X]_{_P} = \begin{bmatrix}
    1 & y(t-1) & [r(t)]_{_P}
\end{bmatrix} \in\mathbb{C}^{N'}\) be the concatenated transformed reservoir states. It is then possible to obtain $W_\text{out}$ by training $[W_\text{out}]_{_P}$ directly.

\begin{align}
    [W_{\text{out}}]_{_P} = \left( {[X]_{_P}}^T\, [X]_{_P} + \alpha \begin{bmatrix}
        I & 0 \\
        0 & P^T P
    \end{bmatrix} \right)^{-1} {[X]_{_P}}^T\, Y
\end{align}

\subsection{Method 3: Direct Parameter Generation (DPG)}

Rather than generating $(W, W_\text{in})$ and then diagonalizing like the EET method, this section explores directly generating $(\Lambda, [W_\text{in}]_{_P}, P)$ to avoid the initial diagonalization cost. We called that method Direct Parameter Generation (DPG). Notably, our results with DPG aligns with the findings of \citet{rodan2010minimum} that minimal and deterministic topologies can replace dense random matrices. 

\begin{figure}[H]
    \centering
    \includegraphics[width=1.0\textwidth]{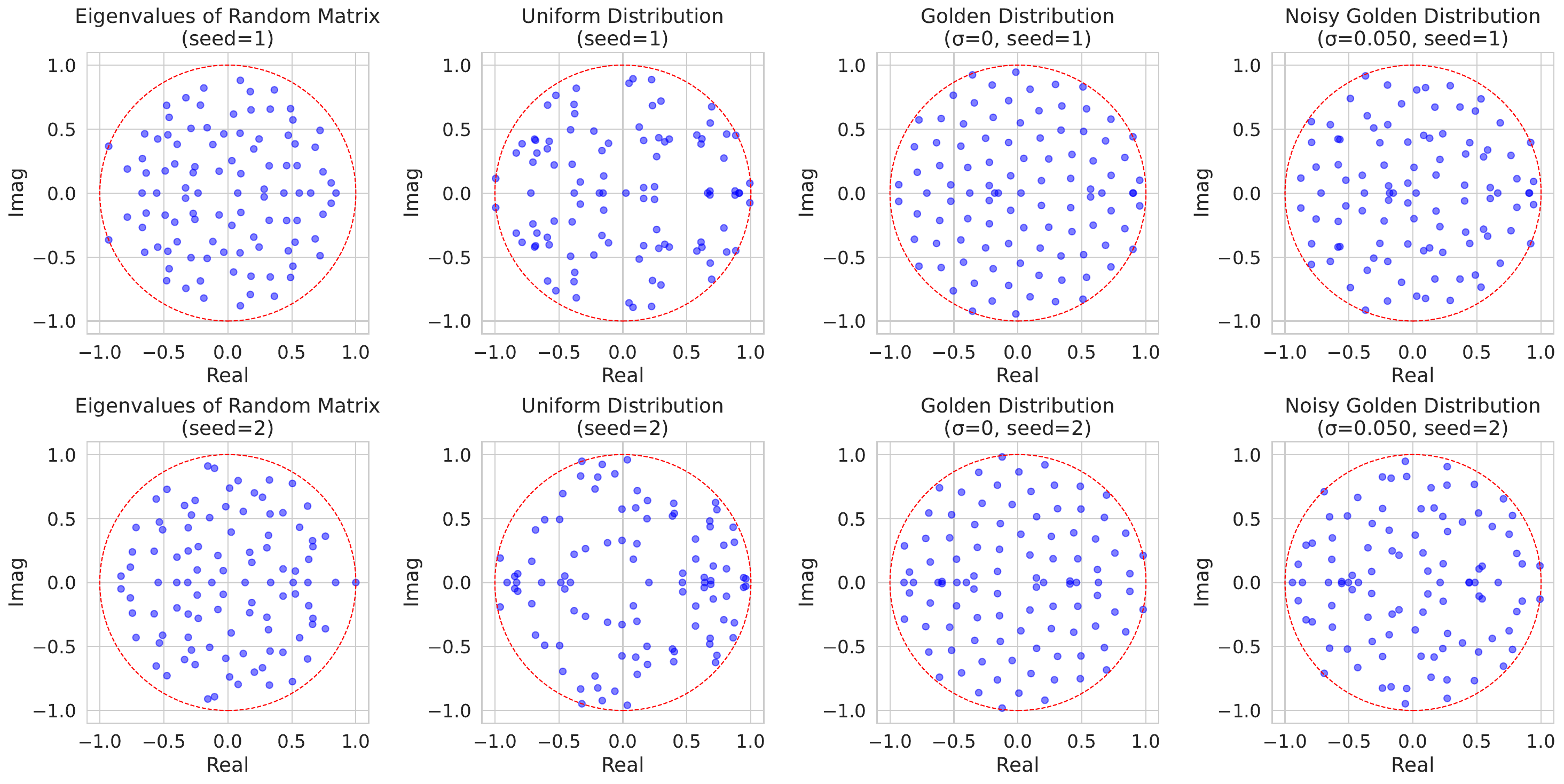}
    \caption{Comparison of eigenvalue distributions in the complex plane. On the first column (left) the spectrum derived from a standard random reservoir matrix $W$. On the second column the spectrum generated via \textit{Uniform Distribution} (Algorithm \ref{alg:gen_eigenvalues}). On the third column the spectrum generated via the deterministic \textit{Golden Distribution} method (Algorithm \ref{alg:golden_eigenvalues}) without noise, and on the fourth column the specturm generated via the \textit{Noisy Golden Distribution}. We observe that the \textit{Noisy Golden distribution} (right) achieves a significantly more homogeneous coverage of the unit disk than the \textit{Uniform Distribution} or its non-noisy counterpart, effectively matching the spectral density of the standard reservoir (left).}
    \label{fig:eigenvalues_distributions}
\end{figure}

\subsubsection{Uniform Distribution}

\noindent The spectral structure of the reservoir weight matrix $W$ plays a fundamental role in determining both the stability and the dynamics of its activity. A well-designed spectrum enables rich temporal dynamics, essential for tasks involving sequence modeling and prediction. Drawing upon insights from random matrix theory (particularly the classical work of \cite{edelman1995many}), we observe that for an $N \times N$ random matrix with i.i.d.\ Gaussian entries, the expected number of real eigenvalues scales as:
\begin{align}
\mathbb{E}[N_{\mathrm{real}}] \sim \sqrt{\frac{2N}{\pi}} 
\end{align}

\noindent We propose an approach to directly construct $P$ and $\Lambda$ without the need of generating $W$ by explicitly engineering the eigenvector matrix $P$ such that its associated eigenvalue spectrum $\Lambda$ matches a desired distribution with a controlled number of real and complex conjugate pairs. To this end, we introduce two complementary algorithms that respectively generate a valid $\Lambda$ and $P$.

\begin{algorithm}
\caption{Random generation of the eigenvalues}
\label{alg:gen_eigenvalues} 
\begin{algorithmic}[1]
\Function{RandomEigenvalues}{$N$, $\textrm{sr}$}
    \State $N_{real} \gets \left\lfloor \sqrt{\frac{2N}{\pi}} \right\rfloor$
    \If{$N_{real} \neq n \mod 2$}
        \State $N_{real} \gets N_{real} + 1$
    \EndIf
    \State $\Lambda_{real}$: array of size $N_{real}$
    \State $R2$: array of size $(N-N_{real})/2$
    \State $\Theta$: array of size $(N-N_{real})/2$
    \State $\Lambda_{cpx}$: array of size $(N-N_{real})/2$
    \State Sample $\Lambda_{real} \sim \text{Uniform}(-\textrm{sr}, \textrm{sr})$
    \State Sample $U \sim \text{Uniform}(0,1)$
    \State Sample $\Theta \sim \text{Uniform}(0, \pi)$
    \State $\Lambda_{cpx} \gets \textrm{sr} \cdot \sqrt{U} \cdot \exp(i\cdot\Theta)$
    \State \Return $\Lambda_{real}, \Lambda_{cpx}, \text{conj}(\Lambda_{cpx})$
\EndFunction
\end{algorithmic}
\end{algorithm}

Algorithm \ref{alg:gen_eigenvalues} generates a structured, random eigenvalue spectrum $\Lambda = (\lambda_1,\dots,\lambda_N)$ by balancing the number of real and complex conjugate pairs. 
Specifically, it ensures that exactly $N_{\text{real}}$ eigenvalues are real and since $W$ is real, the remaining ($N-N_{\text{real}}$) eigenvalues are $(N - N_{\text{real}})/2$ complex conjugate pairs.  
To guarantee this property, the algorithm enforces that $N_{\text{real}}$ and $N$ have the same parity (by incrementing $N_{\text{real}}$ by one if necessary), so that $N - N_{\text{real}}$ is always even and the complex eigenvalues can consistently be arranged into conjugate pairs. 
The radial component of each complex pair is sampled from $\sim \sqrt{\mathcal{U}}$ (square root of a uniform distribution), while the angular component is uniformly drawn from $[0, \pi)$. An example of this distribution is provided in Figure \ref{fig:eigenvalues_distributions}.

\begin{algorithm}
\caption{Random generation of the eigenvectors}
\label{alg:gen_eigenvectors} 
\begin{algorithmic}[1]
\Function{RandomEigenvectors}{$N, N_{real}$}
    \State $P$: array of shape $(N, N)$
    \For{$i \gets 0$ to $N_{real} - 1$}
        \State Sample $v \sim \text{Normal}(0,1)$
        \State $v \gets v/\lVert v \rVert_2$
        \State $P[:, i] \gets v$
    \EndFor
    \For{$k \gets 0$ to $(N-N_{real})/2 - 1$}
        \State Sample $v_R \sim \text{Normal}(0,1)$
        \State Sample $v_I \sim \text{Normal}(0,1)$
        \State $v \gets (v_R + iv_I)/\lVert v_R + iv_I \rVert_2$
        \State $P[:, N_{real}+k] \gets v$
        \State $P[:, (N+N_{real})/2+k] \gets \text{conj}(v)$
    \EndFor
    \State \Return P
\EndFunction
\end{algorithmic}
\end{algorithm}

Algorithm \ref{alg:gen_eigenvectors} constructs a random matrix of eigenvectors $P \in \mathrm{GL}_n(\mathbb{C})$. Since $W$ is real, we sample eigenvectors associated with real eigenvalues and with complex-conjugate eigenvalue pairs in different ways. For real eigenvalues, we sample real-valued eigenvectors with entries drawn from a standard Gaussian distribution. For complex-conjugate eigenvalue pairs, we sample the first eigenvector as a complex-valued vector whose real and imaginary parts have entries drawn from a standard Gaussian distribution, and the second one as its conjugate. This sampling ensures that the resulting set of vectors forms a basis so that $P$ is invertible.

\subsubsection{Golden Distribution}

\begin{algorithm}
\caption{Random generation of the golden eigenvalues}
\label{alg:golden_eigenvalues} 
\begin{algorithmic}[1]
\Function{GoldenEigenvalues}{$N$, $\textrm{sr}$}
    \State $N_\text{real} \gets \left\lfloor \sqrt{\frac{2N}{\pi}} \right\rfloor$
    \State $N_\text{real} \gets (N - N_\text{real}) \mod 2$
    \State $N_\text{cpx} \gets \frac{N - N_\text{real}}{2}$
    \\
    \State $\Lambda_{real}$: array of size $N_{real}$
    \State Sample $\Lambda_{real} \sim \text{Uniform}(-1,1)$
    \\
    \State $\Lambda_\text{cpx}$: array of size $N_\text{cpx}$
    \State $j \gets 0$
    \State $k \gets 0$
    \State $v \gets \text{Uniform}(0,2)$
    \While{$j < N_\text{cpx}$}
        \State $k \gets k + 1$
        \State $v \gets (v + 3 - \sqrt{5}) \mod 2$
        \If{$v < 1$}
            \State $\Lambda_\text{cpx}(j) \gets \sqrt{\frac{k}{2 \, n_\text{cpx}}}\exp(i\pi v)$
            \State $j \gets j + 1$
        \EndIf
    \EndWhile
    \\
    \State $\lambda \gets \frac{\textrm{sr}}{\max(|\Lambda_\text{real}|,|\Lambda_\text{cpx}|)}$
    \State $\Lambda_\text{real} \gets \lambda \cdot \Lambda_\text{real}$
    \State $\Lambda_\text{cpx} \gets \lambda \cdot \Lambda_\text{cpx}$
    \\
    \State $\text{Noise}$: array of size $N_\text{cpx}$
    \State Sample $\text{Noise} \sim \text{Normal}(0,\sigma) + i \cdot \text{Normal}(0,\sigma)$
    \State $\Lambda_\text{cpx} \gets \Lambda_\text{cpx} + \text{Noise}$
    \\
    \State \Return $\Lambda_{real}, \Lambda_{cpx}, \text{conj}(\Lambda_{cpx})$
\EndFunction
\end{algorithmic}
\end{algorithm}

\noindent Algorithm \ref{alg:golden_eigenvalues} introduces a deterministic variation for generating the eigenvalue spectrum, aiming to improve the coverage of the complex plane compared to \textit{Uniform Distribution}. While the partition between real and complex eigenvalues follows the same statistical scaling as Method 3 (where $N_{\text{real}} \approx \sqrt{2N/\pi}$), the placement of complex eigenvalues utilizes a deterministic sequence based on the Golden Ratio. Specifically, the complex eigenvalues with positive imaginary part are arranged in a phyllotaxis spiral pattern (e.g. the pattern of a sunflower, see \citet{vogel1979better}), and the complex eigenvalues with negative imaginary part are their complex conjugates. Eigenvalue are placed in an iterative way, the angular position is updated step by step using the Golden Angle (via the term $3 - \sqrt{5}$), while the magnitude scales with the square root of the index to ensure a constant density distribution across the unit disk. This approach mitigates the clustering and gaps often observed with pseudo-random generators, providing a more homogeneous spectral structure. An example of this distribution is provided in Figure \ref{fig:eigenvalues_distributions}.


\section{Experiments and Results}

In this section, we evaluate our proposed methods on two standard benchmarks: the Multiple Superimposed Oscillators (MSO) and Memory Capacity (MC). Our experiments demonstrate that our approaches (notably Direct Parameter Generation (DPG) combined with the deterministic Golden distribution) achieve equivalent or superior performance compared to standard baselines, while also exposing the structural limitations of matrix diagonalization at extremely low connectivity levels.

\subsection{Multiple Superimposed Oscillators}

\begin{figure}[h]
    \centering
    \includegraphics[width=1\textwidth]{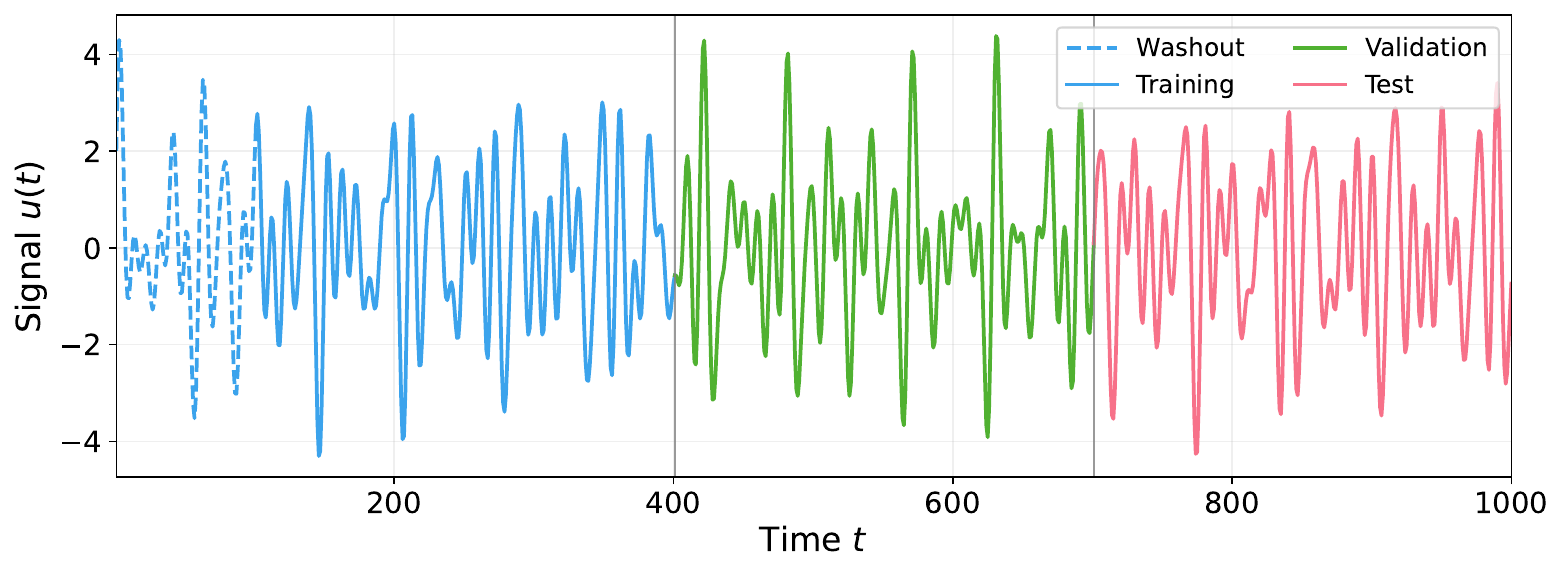}
    \caption{Illustration of the Multiple Superimposed Oscillators (MSO) time series for the $K=5$ task (MSO5). The target signal $U_5(t)$ is generated by summing five distinct  sinusoidal components. The complete sequence of 1000 time steps is partitioned into a training part of 400 steps (blue) which includes an initial 100-step washout used to discard transient reservoir dynamics (dashed blue), a validation of 300 steps (green) for hyperparameter tuning, and a final test of 300 steps (pink) to evaluate predictive performance.}
\end{figure}




The Multiple Superimposed Oscillators (MSO) task \citep{gallicchio2017hierarchical, koryakin2012balanced} is a well-established benchmark used to evaluate the ability of recurrent networks to capture and reproduce complex, multi-frequency temporal dynamics. It consists of a time series generated by summing $K$ discrete-time sinusoidal components:

\begin{equation}
    U_K(t) = \sum_{k=1}^K \sin(\alpha_K\,t)
    \label{eq:mso_regression}
\end{equation}

\noindent where $K$ denotes the total number of superimposed sinusoids and $\alpha_k$ represents their respective angular frequencies. Following the setup introduced by \citet{gallicchio2017hierarchical}, we utilize a defined set of 12 distinct frequencies for the $\alpha_K$ parameters: $\alpha_1 = 0.2, \alpha_2 = 0.331, \alpha_3 = 0.42, \alpha_4 = 0.51, \alpha_5 = 0.63, \alpha_6 = 0.74, \alpha_7 =
0.85, \alpha_8 = 0.97, \alpha_9 = 1.08, \alpha_{10} = 1.19, \alpha_{11} = 1.27, \alpha_{12} = 1.32$.
To systematically evaluate the performance of our models across varying levels of signal complexity, we define 12 distinct sequential tasks based on the number of components $K \in \{1, 2, \dots, 12\}$. For a given task $K$, the target signal $U_K$ is the sum of the first $K$ sinusoids. 
The network is trained to predict the next time step $U_K(t+1)$ given the current ground truth input $U_K(t)$.
For each of the 12 tasks, the generated time series is partitioned into three distinct periods: a training set of $T_{\text{train}} = 400$ time steps, a validation set of $T_{\text{valid}} = 300$ time steps, and a test set of $T_{\text{test}} = 300$ time steps. To mitigate the impact of the arbitrary initial state of the reservoir, the first 100 steps of the training phase are discarded as a washout (or warmup) and are not used to compute the readout weights. To ensure the statistical robustness of our comparative analysis, the evaluation for each task $U_K$ is averaged over 10 seeds. For each test, an exhaustive grid search is conducted to determine the optimal configuration. The complete set of hyperparameter considered is detailed in Table \ref{tab:hyperparams_mso}. Furthermore, by leveraging Theorem \ref{th:win_after}, we significantly accelerate the hyperparameter search for all diagonal reservoir methods. Because the reservoir states can be computed independently of the input scaling factor, we only need to collect these states once per seed rather than recomputing them for each of the three \texttt{input\_scaling} values evaluated in our grid search. This mathematical property effectively divides the state computation time by a factor of three compared to the standard \textit{Normal} baseline.

\begin{table}[H]
\centering
\small
\caption{Hyper-parameters values considered for model selection on the MSO tasks reported from \citet{gallicchio2017hierarchical}}
\begin{tabular}{l|l}
\toprule
\textbf{Hyper-parameter} & \textbf{Values considered} \\
\midrule
Reservoir size ($N$) & 100 \\
Input scaling ($\text{scale}_{\text{in}}$) & 0.01, 0.1, 1 \\
Leaking rate ($\text{lr}$) & 0.1, 0.3, 0.5, 0.7, 0.9, 1.0 \\
Spectral radius ($\rho$) & 0.1, 0.3, 0.5, 0.7, 0.9, 1.0 \\
Ridge regression regularization ($\alpha$) & $10^{-11}, 10^{-10}, \dots, 10^0$ \\
\bottomrule
\end{tabular}
\label{tab:hyperparams_mso}
\end{table}

\begin{table}[H]
\caption{Performance comparison on the Multiple Superimposed Oscillators (MSO) across 10 independent seeds. Metric is the Root Mean Square Error (RMSE). \textit{Normal} corresponds to a standard $W$, \textit{Diagonalized} to the EET method, others corresponds to the DPG method with different initialization strategies.}
\label{tab:results_mso}
\centering
\scriptsize
\begin{tabular}{c|c|c|c|c|c|c}
\toprule
Task & Normal & Diagonalized & Uniform Dist. & Golden Dist. & Noisy Golden Dist.  & Sim Dist. \\
& & & & ($\sigma = 0$) & ($\sigma = 0.2$) & \\
\midrule
MSO1 & 1.65e-14 & \textbf{1.58e-14} & 5.85e-14 & 2.49e-14 & 4.77e-14 & 3.56e-14 \\
MSO2 & 2.55e-13 & 2.78e-13 & 2.28e-13 & \textbf{1.45e-13} & 2.39e-13 & 2.44e-13 \\
MSO3 & 5.42e-12 & 9.14e-12 & \textbf{4.49e-12} & 9.07e-12 & 6.14e-12 & 8.37e-12 \\
MSO4 & 1.39e-10 & 5.77e-10 & 3.64e-10 & 7.22e-11 & \textbf{6.93e-11} & 2.28e-10 \\
MSO5 & 2.75e-09 & 4.03e-08 & 2.95e-08 & \textbf{5.24e-10} & 1.63e-09 & 1.87e-08 \\
MSO6 & \textbf{7.38e-09} & 2.54e-08 & 2.07e-08 & 1.07e-08 & 1.18e-08 & 6.16e-08 \\
MSO7 & \textbf{2.96e-08} & 9.48e-08 & 7.16e-08 & 5.98e-08 & 5.36e-08 & 8.55e-08 \\
MSO8 & \textbf{2.75e-08} & 9.68e-08 & 3.57e-07 & 1.15e-07 & 6.44e-08 & 1.41e-07 \\
MSO9 & \textbf{4.98e-08} & 2.69e-07 & 4.33e-07 & 1.69e-07 & 1.03e-07 & 1.63e-07 \\
MSO10 & 4.65e-07 & 3.32e-07 & 4.15e-07 & 2.31e-07 & \textbf{1.61e-07} & 2.73e-07 \\
MSO11 & 5.62e-07 & 7.38e-07 & 1.85e-06 & 7.49e-07 & \textbf{3.16e-07} & 3.71e-07 \\
MSO12 & 9.71e-07 & 2.98e-06 & 1.34e-06 & 1.01e-06 & \textbf{8.44e-07} & 2.63e-06 \\
\bottomrule
\end{tabular}
\end{table}

The results presented in Table \ref{tab:results_mso} show an interesting distribution of the best scores (in bold) across the different evaluated methods. This diversity suggests that our optimization approaches, particularly those based on Direct Parameter Generation (DPG) with Golden Distribution, perform at least as well as the \textit{Normal} method (standard linear ESN). The slight RMSE variations from one method to another are naturally explained by the structural differences in the weights: since the DPG approach bypasses the generation of an explicit matrix $W$, the eigenvalue spectrum and the eigenvector configuration inevitably differ from those of a standard random matrix.
Across the 12 MSO tasks, \textit{Diagonal Golden Noised ($\sigma = 0.2$)} achieves the best score 4 times (MSO 4, 10, 11, 12), in par with the \textit{Normal} baseline, which also dominates 4 tasks (MSO 6, 7, 8, 9). If we strictly isolate these two methods for a direct comparison, they are in a perfect tie (6 to 6). This strong competitiveness indicates that our strategy of generating a noisy "Golden" spectrum constitutes a highly relevant alternative to classical initialization.
Conversely, it can be noted that the \textit{Diagonal Sim} method is the only one to never attain the top rank, although it closely follows the general trend with scores near those of the leaders. Furthermore, the comparison among DPG methods reveals that the addition of noise to the \textit{Golden Distribution} outperforms the strict \textit{Golden Distribution} version. This phenomenon can be understood intuitively. Visually and statistically, the injection of noise allows the deterministic Golden distribution to smooth its structure and much more faithfully mimic the natural empirical eigenvalue distribution of a true random matrix.
These results validate the DPG approach paired with a \textit{Noisy Golden Distribution}. They empirically confirm that it is possible to directly generate a structured eigenvalue distribution combined with random eigenvectors to guarantee a performance level equivalent, and occasionally superior, to the standard method.

\begin{figure}[H]
    \centering
    \includegraphics[width=1\textwidth]{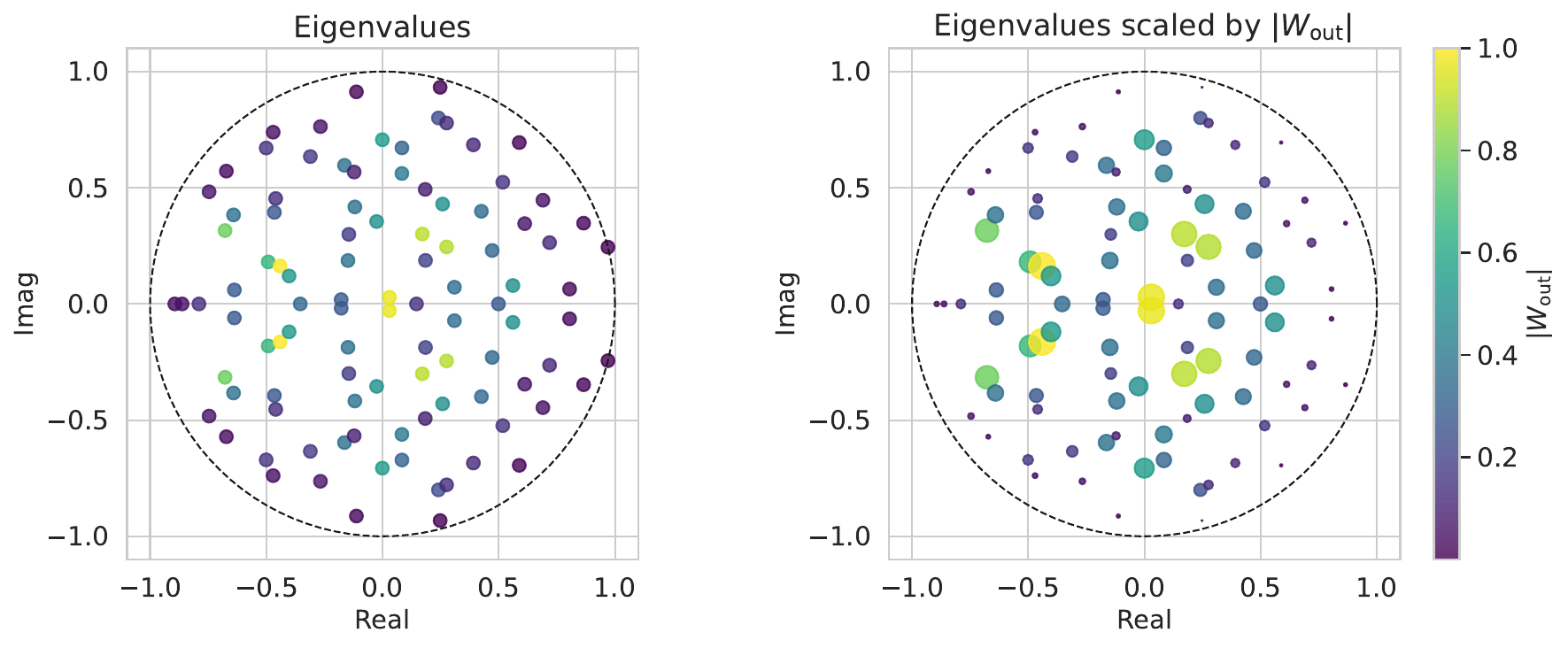}
    \caption{Visualization of spectral importance via readout weights on the MSO task. The reservoir eigenvalues are plotted in the complex plane, where the marker size is proportional to the absolute value of the corresponding weight in $W_{\text{out}}$ (normalized between 0 and 1). Larger points identify the specific eigenvalues that contribute most significantly to the model's prediction for this task.
    Points are the same on the left and on the right, just the size of points is changing: on the right, points close to the circle are non longer visible which may create an illusion that the point distribution was contracted, but that's not the case.}
    \label{fig:eigenvalues_wout}
\end{figure}





\subsection{Memory Capacity}

The Memory Capacity (MC), introduced by Jaeger \cite{jaeger2001short}, is a fundamental metric used to quantify the short-term memory of a recurrent neural network. It measures the reservoir's ability to reconstruct past continuous inputs from its current internal state. For an independent and identically distributed (i.i.d.) input sequence $u(t)$ and a trained output unit $y_k(t)$, the determination coefficient for a specific delay $k$ is given by the squared correlation coefficient:

\begin{equation}
d(u(t-k), y_k(t)) = \frac{\text{cov}^2(u(t-k), y_k(t))}{\sigma^2(u(t))\sigma^2(y_k(t))}
\end{equation}

The $k$-delay short-term memory capacity of the network is then formally defined as the maximum of this determination coefficient optimized over the output weights $w_{k}^{out}$:

\begin{equation}
MC_k = \max_{w_k^{out}} d[w_k^{out}](u(t-k), y_k(t))
\end{equation}

In our experiments, we utilize this established task to test the practical memory capacity of our methods. We use reservoirs without a leaking rate and with the spectral radius strictly set to 1. We then evaluate both the $MC_k$ scores and the computational duration of the different initialization methods across various values of reservoir units $N$ and requested delay $k$.

\begin{figure}[H]
    \centering
    \includegraphics[width=1\textwidth]{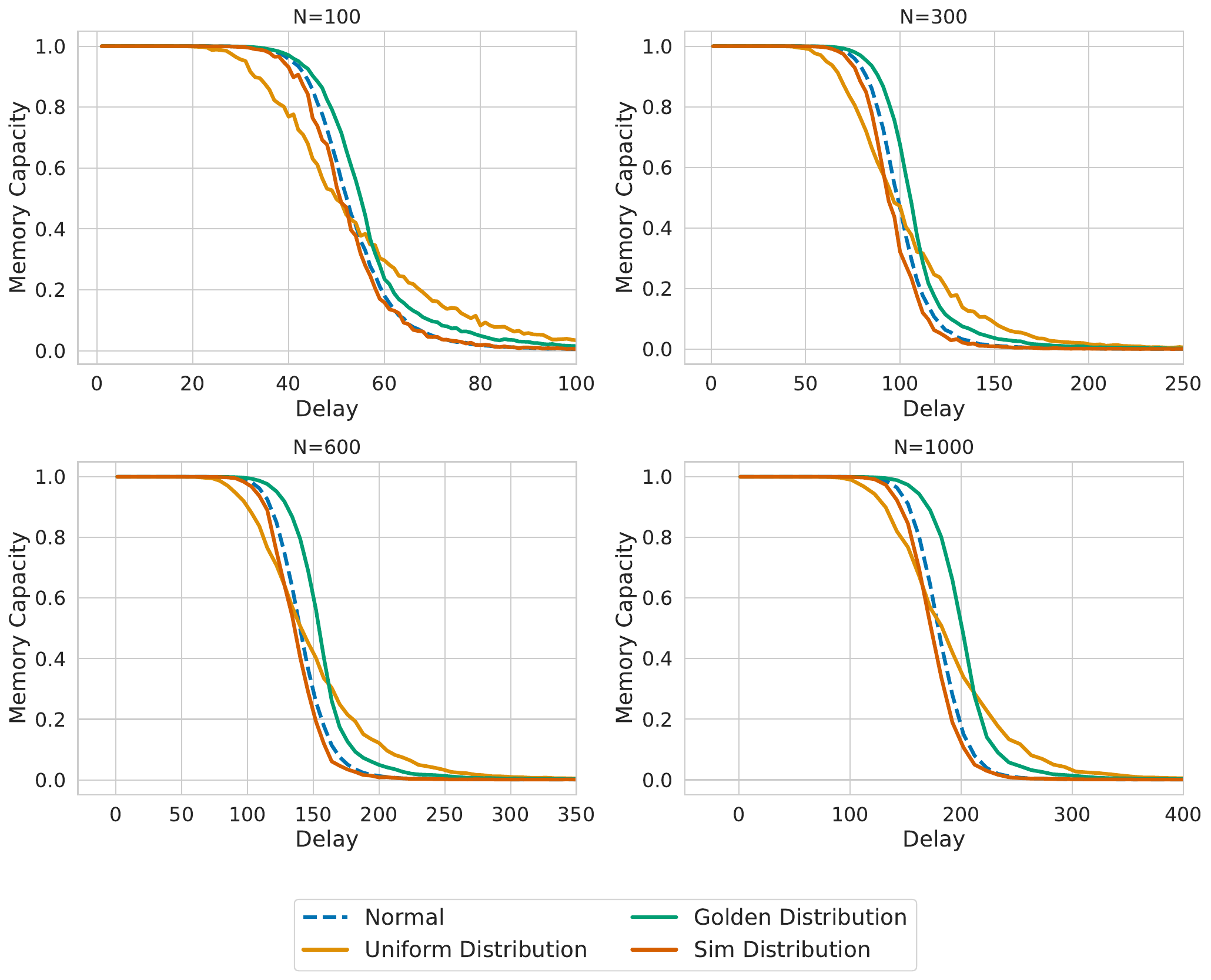}
    \caption{Memory capacity as a function of delay across different reservoir sizes. The four subplots illustrate the memory capacity evaluated against the delay for reservoir sizes N=100, 300, 600, and 1000. As expected, increasing the reservoir size brings an overall improvement in memory capacity. The performance of five different methods is compared: the standard Echo State Network with a classic weight matrix $W$ (\textit{Normal}), Direct Parameter Generation (DPG) utilizing a  \textit{Uniform Distribution}, DPG utilizing the deterministic \textit{Golden Distribution}, and DPG utilizing the \textit{Sim Distibution} which uses eigenvalues extracted from a standard random matrix $W$ combined with randomly generated eigenvectors $P$. Observing the results, the \textit{Diagonal Uniform} method exhibits a more balanced degradation profile than the \textit{Normal} baseline, though they remain roughly equivalent and intersect near a memory capacity of 0.5. Notably, the \textit{Golden Distribution} initialization systematically outperforms the standard \textit{Normal} method across all reservoir sizes. Conversely, the \textit{Sim Distribution} curve closely follows the \textit{Normal} baseline, suggesting that the eigenvectors play a secondary role compared to eigenvalues, even if it demonstrates a small but consistent underperformance.}
    \label{fig:mc_delay}
\end{figure}

\begin{figure}[H]
    \centering
    \includegraphics[width=1\linewidth]{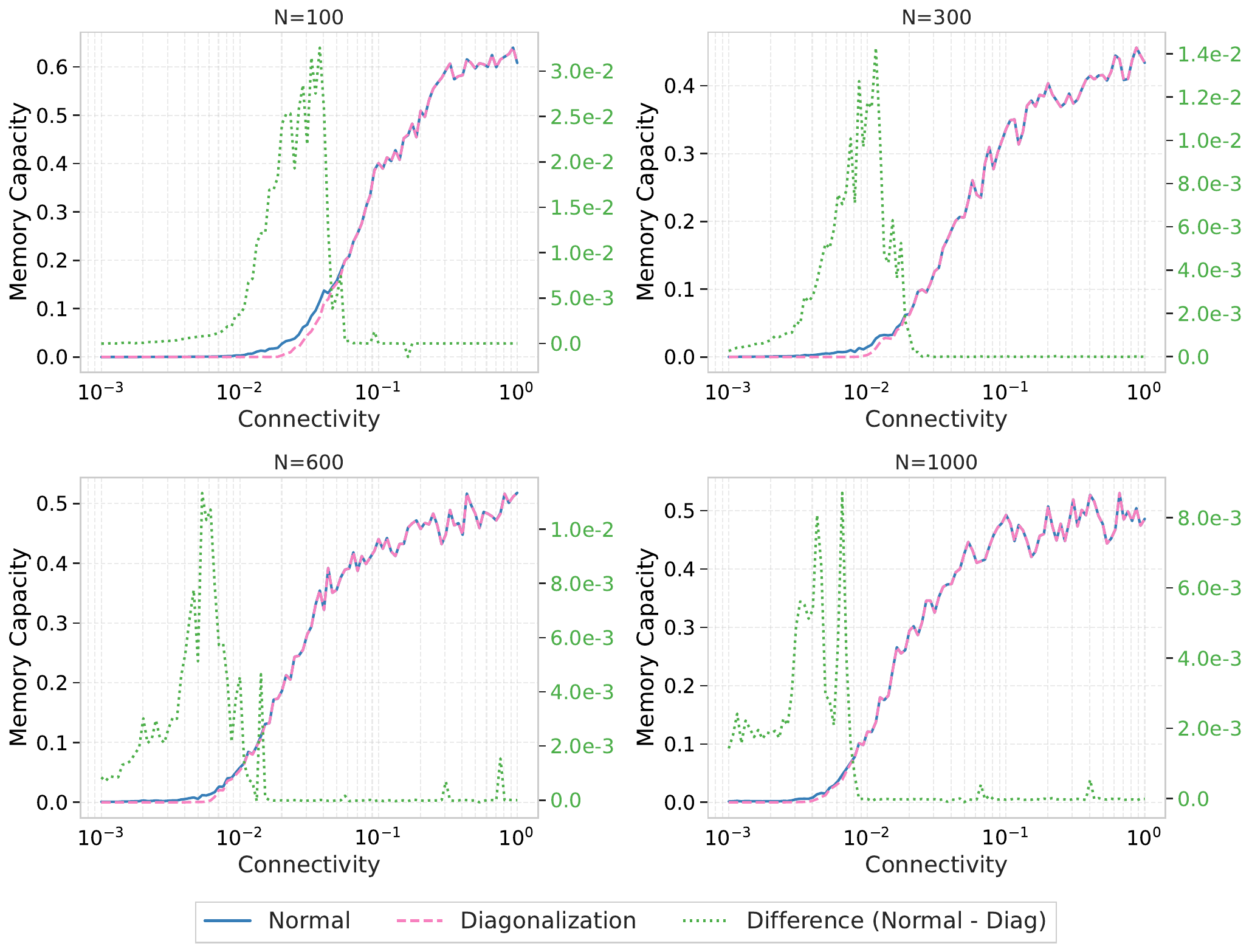}
    \caption{Memory capacity as a function of reservoir connectivity. The four subplots illustrate the memory capacity evaluated across varying levels of matrix connectivity for reservoir sizes $N=100, 300, 600, \text{and } 1000$. To ensure a task of moderate and proportional difficulty, the requested delay for each configuration is chosen so that the memory capacity yields 0.5 when the connectivity is one (to chose the corresponding value we looked at Figure \ref{fig:mc_delay}). The graphs compare the performance of the standard weight matrix approach (\textit{Normal}) against its diagonalized counterpart (\textit{Diagonalization}, corresponding to the EET and EWT methods), with the secondary axis tracking their absolute performance gap (\textit{Difference}). As expected, both methods experience a severe drop in memory capacity at extremely low connectivity levels; as the matrix $W$ becomes very sparse, the network loses the dimensionality required to maintain temporal representations. Moreover, the results highlight a specific minimum connectivity threshold which depends of the reservoir size. Below this connectivity, the \textit{Diagonalization} method begins to systematically underperform the \textit{Normal} baseline. At such extreme sparsity, the standard matrix $W$ lacks sufficient structural complexity for an effective eigendecomposition, causing the eigenspectrum to collapse and lose essential dimensional richness. Above this minimal threshold, the diagonalized approach successfully matches the standard baseline's performance.}
\end{figure}

\section{Conclusion}


Recent work in reservoir computing has highlighted two complementary trends that motivate the present study. On the one hand, several contributions have stressed the need to reduce the computational cost of reservoir-based models, either by simplifying the dynamics or by replacing recurrent computations with alternative feature constructions \citep{gauthier2021nextgenRC}. On the other hand, a number of theoretical and empirical studies have shown that linear reservoirs already possess strong dynamical and memory-related properties, and can constitute competitive building blocks for temporal processing \citep{dambre2012information, gonon2019reservoir, voelker2019legendre, tino2020dynamical, verzelli2021input}. In this context, our objective is not to revisit the modeling capabilities of linear reservoirs, but to revisit their implementation, and to investigate how their recurrent dynamics can be preserved while being expressed in a computationally more suitable form.

In this study, we showed that the dynamics of Linear Echo State Networks can be reformulated through matrix diagonalization, turning costly recurrent updates into efficient pointwise operations. This diagonalization-based optimization preserves the predictive power of standard linear ESNs while reducing computational complexity from $\mathcal{O}(N^2)$ to $\mathcal{O}(N)$. Beyond the theoretical speedup, our empirical results demonstrate that both the Eigenbasis Weight Transformation (EWT) and End-to-End Eigenbasis Training (EET) retain numerical stability and accuracy, with negligible differences from standard linear ESN outputs. We further introduced the Direct Parameter Generation (DPG) approach, which bypasses diagonalization entirely by directly sampling eigenvalues and eigenvectors, allowing better control over the eigenvalues that controls the dynamics of the reservoir, while replicating or even outperforming the behavior of classical Linear ESNs. 

Our investigation into the diagonalization of Linear ESNs reveals that the behavior of these models is almost entirely governed by their spectral properties. The success of the Direct Parameter Generation (DPG) approach, which bypasses the generation of an explicit weight matrix $W$ to directly sample eigenvalues, suggests that the explicit matrix structure of the reservoir is secondary to the distribution of its eigenvalues. This finding aligns with the observation that structured spectral coverages, such as the deterministic "Golden" distribution, can yield performance superior to purely stochastic initializations, particularly for tasks requiring long-term memory, though its superiority for tasks with different dynamic requirements remains an open question.

This spectral perspective is further reinforced by analyzing the trained readout weights in the eigenbasis. As illustrated in Figure~\ref{fig:eigenvalues_wout}, we observe a strong heterogeneity in the contribution of spectral components: only a subset of eigenvalues is associated with large output weights, indicating that the readout mechanism naturally selects the specific frequency components or decay rates relevant to the task. This fundamental reliance on the spectrum is mathematically corroborated by our demonstration that the input weights $W_{\text{in}}$ can be applied entirely after the recurrent updates. Since the temporal states depend strictly on the eigenvalues $\Lambda$, it proves that the memory and dynamical richness of the network are governed by its spectral configuration. Consequently, DPG positions itself as a rational approach to reservoir initialization rather than a computational optimization. It allows for the explicit selection of spectral components, shifting the paradigm from arbitrary matrix generation to controlled spectral initialization.

While our method offers significant advantages, it is important to note that the diagonalization of a pre-existing matrix (EWT) relies on the assumption that $W$ is diagonalizable and incurs a one-time $\mathcal{O}(N^3)$ preprocessing cost. Although this cost is amortized over long sequences, the DPG framework answers this bottleneck by avoiding matrix diagonalization entirely, making it highly scalable.
Finally, the independent evolution of state variables in the eigenbasis has implications beyond sequential efficiency. Since each neuron evolves independently, our formulation allows for the parallelization of state updates across neurons. 
Furthermore, because the dynamics are strictly linear, temporal computations can be parallelized across the entire sequence length using parallel scan algorithms \citep{blelloch1990prefix}, a property recently leveraged by modern state-space models like Mamba \citep{gu2024mamba}, suggesting that diagonalized Linear ESNs could leverage similar hardware optimizations. For future application, it could be interesting to adapt the methods with non linear readout \citep{gonon2019reservoir}.

These contributions suggest a more efficient use of linear reservoirs avoiding unnecessary computations, and a shift of paradigm in linear reservoir initialization towards the direct selection of spectral distributions.
In future works, it would be interesting to explore the various possible distributions for the eigenvalues, in order to systematically investigate their impact on reservoir dynamics and model performance.


\newpage
\appendix

\section{Practical Implementation: Memory Reinterpretation}
\label{app:implementation}

To maximize computational efficiency, we implement a hybrid approach: the reservoir update is performed by reinterpreting part of memory as complex for element-wise operations. Conversely, the training and readout steps remain strictly in the real domain to avoid supplementary complexity.

\subsection{Basis Construction}

Assume the reservoir matrix $W \in \mathbb{R}^{N \times N}$ is diagonalizable. Let $P \in \text{GL}_N(\mathbb{C})$ be the matrix of eigenvectors and $\Lambda$ the diagonal matrix of eigenvalues, partitioned as follows:
\begin{itemize}
    \item $\Lambda = \text{diag}(\lambda_1, \dots, \lambda_{N_r}, \mu_1, \bar{\mu}_1, \dots, \mu_{N_i}, \bar{\mu}_{N_i})$, where $\lambda_j \in \mathbb{R}$ and $\mu_k \in \mathbb{C} \setminus \mathbb{R}$.
    \item $P = [u_1, \dots, u_{N_r}, v_1, \bar{v}_1, \dots, v_{N_i}, \bar{v}_{N_i}]$ is the associated matrix of eigenvectors.
    \item $\Lambda_\text{real} = (\lambda_1, \dots, \lambda_{N_r})$ and $\Lambda_{cpx} = (\mu_1, \dots, \mu_{N_i})$.
\end{itemize}

\noindent Let define $Q$ as:
$$ Q = [u_1, \dots, u_{N_r}, \RE\{v_1\}, \IM\{v_1\}, \dots, \RE\{v_{N_i}\}, \IM\{v_{N_i}\}]$$

\noindent Applying Theorem~\ref{theorem_1} with the change of basis matrix $Q$ yields the transformed system. However, in the $Q$-basis, the reservoir matrix $[W]_{_Q} = Q^{-1}WQ$ is no longer diagonal; it is block-diagonal. Performing the update $[r(t-1)]_{_Q} [W]_{_Q}$ would typically require $2 \times 2$ matrix multiplications for the complex components, losing the efficiency of the diagonal representation.

\subsection{The Memory View Trick}

To bypass the computational bottleneck of block-diagonal matrix multiplications, we propose an efficient implementation based on memory reinterpretation. The core idea is to process the state update in the complex domain where $[W]_{_P}$ is purely diagonal, but to project it back into the real $Q$-basis for the readout phase. Since modern numerical libraries store complex numbers as pairs of contiguous real floating-point values, we treats the exact same physical memory block as either real or complex depending on the operation context. This allows us to perform the complex part of the reservoir update in the "complex world" and the real part of the reservoir update in the "real world" with fast element-wise operations ($\odot$), while keeping the readout training only in the "real world." While processing the state $r$ entirely in complex arithmetic would normally double the number of parameters required for the readout (due to separate real and imaginary weights), our approach guarantees that the readout cost remains perfectly equivalent to that of a standard real-valued ESN.

We define $[r]_{_Q}^{\text{real}}$ and $[r]_{_Q}^{\text{cpx}}$ as two different slices of the memory block $[r]_{_Q}$, with $[r]_{_Q}^{\text{cpx}}$ interpreted either as a list of couples reals number, or a list of complex number (containing real and imaginary part):
\begin{itemize}
    \item $[r]_{_Q}^{\text{real}} = [r]_{_Q}[1:N_r]$
    \item $[r]_{_Q}^{\text{cpx}} = [r]_{_Q}[N_r+1:N].\text{view}(\mathbb{C})$
\end{itemize}

\noindent Using this dual representation, the reservoir state update can be computed efficiently through separate element-wise multiplications for the real and complex partitions, before accumulating the input and feedback terms in the real domain. \\

\noindent \textbf{Reservoir Update Step:}
\begin{align*}
    [r]_{_Q}^{\text{real}} &\gets [r]_{_Q}^{\text{real}} \odot \Lambda_{\text{real}} \\
    [r]_{_Q}^{\text{cpx}} &\gets [r]_{_Q}^{\text{cpx}} \odot \Lambda_{\text{cpx}} \\
    [r]_{_Q} &\gets [r]_{_Q} + u(t)[W_{\text{in}}]_{_Q} + y(t-1)[W_{\text{fb}}]_{_Q}
\end{align*}

\begin{proof}

We define the transformation matrix $\mathcal{Z} = \text{diag}(I_{n_r}, Z, \dots, Z)$, where $Z$ is the change-of-basis matrix mapping a complex conjugate pair to its real and imaginary parts:
\[ Z = \frac{1}{2} \begin{bmatrix} 1 & 1 \\ -i & i \end{bmatrix} \]
Then, $Q = P\mathcal{Z}$. By definition of the transformation bases, the relation $[r]_{_Q} = r Q = r P \mathcal{Z} = [r]_{_P} \mathcal{Z}$ always holds. We can expand this explicitely:

\begin{equation*}
[r]_{_Q} = \begin{bmatrix} [r]_{_P}^{(\lambda_1)} & \dots & [r]_{_P}^{(\lambda_{n_r})} & \RE\{[r]_{_P}^{(\mu_1)}\} & \IM\{[r]_{_P}^{(\mu_1)}\} & \dots & \RE\{[r]_{_P}^{(\mu_{n_i})}\} & \IM\{[r]_{_P}^{(\mu_{n_i})}\} \end{bmatrix}
\end{equation*}
\noindent When applying the memory views, the data partitions correspond exactly to the real eigenvalues and the complex eigenvalues in the original $P$-basis:
\begin{align*}
[r]_{_Q}^\text{real} \stackrel{\text{def}}{=} ([r]_{_Q})[1:n_r] &= \begin{bmatrix} [r]_{_P}^{(\lambda_1)} & \dots & [r]_{_P}^{(\lambda_{n_r})} \end{bmatrix} \\
[r]_{_Q}^\text{cpx} \stackrel{\text{def}}{=} ([r]_{_Q})[n_r+1:n].\text{view}(\mathbb{C}) &= \begin{bmatrix} [r]_{_P}^{(\mu_1)} & \dots & [r]_{_P}^{(\mu_{n_i})} \end{bmatrix}
\end{align*}

\noindent Because the memory perfectly aligns with the real and imaginary components of the $P$-basis projection, performing the element-wise multiplications $[r]_{_Q}^{\text{real}} \gets [r]_{_Q}^{\text{real}} \odot \Lambda_{\text{real}}$ and $[r]_{_Q}^{\text{cpx}} \gets [r]_{_Q}^{\text{cpx}} \odot \Lambda_{\text{cpx}}$ on the specific views is strictly equivalent to performing the full complex element-wise multiplication in the original $P$-basis:
\begin{equation}
    [r]_{_P} \gets [r]_{_P} \odot \Lambda
\end{equation}

\noindent By reverting to the matrix multiplication notation, this operation directly corresponds to the state transition in the $Q$-basis:
\begin{equation}
    [r]_{_Q} \gets [r]_{_Q} [W]_{_Q}
\end{equation}

\noindent Finally, by accumulating the input real value and feedback projections directly onto the real memory block $[r]_{_Q}$, the complete reservoir step is correctly and efficiently evaluated as:
\begin{equation}
    [r]_{_Q} \gets [r]_{_Q} [W]_{_Q} + u(t)[W_{\text{in}}]_{_Q} + y(t-1)[W_{\text{fb}}]_{_Q}
\end{equation}
\end{proof}

\subsection{Readout and Training}

While the reservoir step partly exploits complex arithmetic to achieve optimal parallel efficiency, the readout and training are performed entirely in the $Q$-basis. Following Theorem~\ref{theorem_1} (iii) and (iv), the readout weights are learned via:

\begin{align}
    y(t) &= W_\text{out,bias} + y(t-1)\, W_\text{out,out} + [r(t)]_{_Q}\, [W_\text{out,res}]_{_Q}
\end{align}

\begin{equation}
    [W_{\text{out}}]_{_Q} = \left( [X]_{_Q}^T [X]_{_Q} + \alpha \begin{bmatrix} I & 0 \\ 0 & Q^T Q \end{bmatrix} \right)^{-1} [X]_{_Q}^T Y
\end{equation}






\section{Special case of $D_\text{in} = 1$}
\label{app:special_case_win}

This section follows the reasoning present in section \ref{sec:win_end}. Suppose now that the dimension of the input signal is $D_\text{in} = 1$ and $r(0)=0$. Then $R(t)$ is effectively a row vector $1 \times N$.

\begin{align*}
    y(t) &= r(t) W_\text{out} \\
    &= (w_\text{in} \odot R(t)) W_\text{out} \\
    &= R(t) \left( \begin{bmatrix} w_\text{in}^T \dots w_\text{in}^T \end{bmatrix} \odot W_\text{out} \right)
\end{align*}

\noindent Let $\mathcal{L}$ be the objective loss function (e.g., Mean Squared Error) and $Y(t)$ the truth signal. The standard readout $W_\text{out}$ is typically trained such that:
\begin{equation*}
    W_\text{out} = \argmin_{\beta \in \mathbb{C}^{N \times D_\text{out}}} \mathcal{L}\left( \left( r(t)\beta, Y(t) \right)_{1 \leq t \leq T} \right)
\end{equation*}

\noindent \textbf{Note:} If an L2 penalty (Ridge regression) is applied to regularize the readout, changing the parameterization alters the problem. Penalizing the transformed composite weights instead of $\|W_\text{out}\|^2$ will yield a different effective regularization, meaning standard Ridge regression is not perfectly equivalent under this transformation. However, for unregularized optimization, the following holds:

\begin{theorem}
If no weight in $w_\text{in}$ is 0, then the optimization can be performed directly on the unweighted states $R(t)$:
\begin{equation*}
    \gamma^* = \argmin_{\gamma \in \mathbb{C}^{N \times 1}} \mathcal{L}\left( \left( R(t)\gamma, Y(t) \right)_{1 \leq t \leq T} \right)
\end{equation*}
Once the optimal $\gamma^*$ is found, the original output weights can be uniquely recovered via $W_\text{out} = \gamma^* \oslash \begin{bmatrix} w_\text{in}^T \dots w_\text{in}^T \end{bmatrix}$, where $\oslash$ denotes element-wise division.
\end{theorem}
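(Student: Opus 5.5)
The argument is a reparametrization of the unregularized least-squares (or general loss) problem by an invertible diagonal scaling. By Theorem~\ref{th:win_after} in the case $D_\text{in}=1$, the state is $r(t) = w_\text{in} \odot R(t) = R(t)\,\mathrm{diag}(w_\text{in})$. Here $w_\text{in}$ is the single row of $W_\text{in}$ and $R(t)$ is the $1\times N$ row defined before Theorem~\ref{th:win_after}. Hence for any $\beta \in \mathbb{C}^{N\times D_\text{out}}$,
\[
r(t)\beta = R(t)\,\mathrm{diag}(w_\text{in})\,\beta = R(t)\left( \begin{bmatrix} w_\text{in}^T \dots w_\text{in}^T \end{bmatrix} \odot \beta \right).
\]
The identity $\mathrm{diag}(w_\text{in})\beta = [w_\text{in}^T \dots w_\text{in}^T]\odot\beta$ is immediate entrywise: row $j$ of $\beta$ is multiplied by $(w_\text{in})_j$. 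This is the computation displayed just before the statement. I will take $\gamma$ of the same shape as $\beta$, namely $N\times D_\text{out}$; the statement's $N\times 1$ is the case $D_\text{out}=1$.

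Next, define the map $\phi(\beta) = \mathrm{diag}(w_\text{in})\beta$. The hypothesis that no entry of $w_\text{in}$ is zero makes $\mathrm{diag}(w_\text{in})$ invertible. So $\phi$ is a bijection of $\mathbb{C}^{N\times D_\text{out}}$, with inverse $\gamma \mapsto \mathrm{diag}(w_\text{in})^{-1}\gamma = \gamma \oslash [w_\text{in}^T \dots w_\text{in}^T]$. Define the objectives $F(\beta) = \mathcal{L}((r(t)\beta, Y(t))_t)$ and $G(\gamma) = \mathcal{L}((R(t)\gamma, Y(t))_t)$. The identity above gives $F = G\circ\phi$. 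A bijective change of variables maps minimizers to minimizers: $\beta^*$ minimizes $F$ if and only if $\phi(\beta^*)$ minimizes $G$. Therefore, if $\gamma^*$ is a minimizer of $G$, then $W_\text{out} = \phi^{-1}(\gamma^*) = \gamma^*\oslash[w_\text{in}^T\dots w_\text{in}^T]$ minimizes $F$, and the two optimal values coincide.

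The main point needing care is the word ``uniquely''. The recovery map $\phi^{-1}$ is injective, so each $\gamma^*$ yields exactly one $W_\text{out}$, and the two argmin sets are in bijection. However, uniqueness of the minimizer itself, for instance for MSE, requires the design matrix stacking the $R(t)$ (equivalently the $r(t)$) to have full column rank. Since $\mathrm{diag}(w_\text{in})$ is invertible, this rank condition holds for one design matrix exactly when it holds for the other. I would therefore state the result as: the correspondence between minimizers is one-to-one, and uniqueness on one side is equivalent to uniqueness on the other.

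Finally, I would remark why regularization is excluded, consistent with the preceding note. The penalty $\|\beta\|^2$ becomes $\|\mathrm{diag}(w_\text{in})^{-1}\gamma\|^2$, which differs from $\|\gamma\|^2$ unless every $|(w_\text{in})_j|=1$. So the argument goes through only for unpenalized $\mathcal{L}$, or for a penalty that is reweighted accordingly.
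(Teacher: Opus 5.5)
Your proposal is correct and follows the same route as the paper. The paper also sets $\gamma = [w_\text{in}^T \dots w_\text{in}^T] \odot W_\text{out}$ (your $\mathrm{diag}(w_\text{in})W_\text{out}$), notes that nonzero entries of $w_\text{in}$ make this a bijection, and observes $y(t) = R(t)\gamma$; your explicit $F = G\circ\phi$ step, and your point that ``uniquely'' concerns injectivity of the recovery map while uniqueness of the minimizer needs a full-rank design matrix, spell out what the paper leaves implicit.
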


\begin{proof}
Let $\gamma = \begin{bmatrix} w_\text{in}^T \dots w_\text{in}^T \end{bmatrix} \odot W_\text{out}$. Since $w_\text{in}$ contains no zeros, there is a bijection between $W_\text{out}$ and $\gamma$. The output equation becomes:
\begin{align*}
    y(t) &= R(t) (\begin{bmatrix} w_\text{in}^T \dots w_\text{in}^T \end{bmatrix} \odot W_\text{out}) \\
         &= R(t) \gamma
\end{align*}
Thus, we can optimize for $\gamma$ directly using the states $R(t)$, which do not depend on $w_\text{in}$.
\end{proof}

\noindent Under these hypotheses, the effective readout weights can be learned without explicitly instantiating $w_\text{in}$ during the state collection phase.

\acks{Experiments presented in this paper were done using the Plafrim experimental testbed, supported by Inria, CNRS (LABRI and IMB), Université de Bordeaux, Bordeaux INP and Conseil Régional d’Aquitaine. We would like to thank the Plafrim team for providing computational resources and technical support.}

\newpage

\vskip 0.2in
\bibliography{references}

\end{document}